\journal{Elsevier}
\newtheorem{theorem}{\textbf{Theorem}}
\newtheorem{lemma}{\textbf{Lemma}}
\newtheorem{example}{\textbf{Example}}
\newtheorem{corollary}{\textbf{Corollary}}
\newtheorem{remark}{\textbf{Remark}}
\newtheorem{definition}{\textbf{Definition}}
\newtheorem{proposition}{\textbf{Proposition}}
	\newenvironment{proof}{{{\bf Proof:}}}{\hfill $\square$\par}
\begin{document}
		
		\begin{frontmatter}
			
			\title{On Real Structured Controllability/Stabilizability/Stability Radius: Complexity and Unified Rank-Relaxation based Methods} 

			
			\author{Yuan Zhang, Yuanqing Xia, and Yufeng Zhan}
			\address{School of Automation, Beijing Institute of Technology, Beijing, China\\~Email: $\emph{\{zhangyuan14,xia\_yuanqing,yu-feng.zhan\}@bit.edu.cn}$} 

		\begin{abstract}

This paper addresses the real structured controllability, stabilizability, and stability radii (RSCR, RSSZR, and RSSR, respectively) of linear systems, which involve determining the distance (in terms of matrix norms) between a (possibly large-scale) system and its nearest uncontrollable, unstabilizable, and unstable systems, respectively, with a prescribed affine structure. This paper makes two main contributions. First, by demonstrating that determining the feasibilities of RSCR and RSSZR is NP-hard when the perturbations have a general affine parameterization, we prove that computing these radii is NP-hard. Additionally, we prove the NP-hardness of a problem related to the RSSR. These hardness results are independent of the matrix norm used. Second, we develop unified rank-relaxation based algorithms for these problems, which can handle both the Frobenius norm and the $2$-norm based problems and share the same framework for the RSCR, RSSZR, and RSSR problems. These algorithms utilize the low-rank structure of the original problems and relax the corresponding rank constraints with a regularized truncated nuclear norm term. Moreover, a modified version of these algorithms can find local optima with performance specifications on the perturbations, under appropriate conditions. Finally, simulations suggest that the proposed methods, despite being in a simple framework, can find local optima as good as several existing methods.

		\end{abstract}
		
		\begin{keyword}
		Controllability radius, stability radius, computational complexity, affine structure, convex-relaxation
		\end{keyword}
		
	\end{frontmatter}

\section{Introduction}
Controllability and stability are two fundamental properties in control system analysis and design \cite{kailath1980linear}. The conventional notions of controllability and stability are binary concepts. In many practical applications, it is desirable to know how robust a system is to preserve controllability and stability against parameter perturbations. This is particularly important in robustness analysis and robust controller design \cite{K.M.UpperSaddleRiverNewJersey1996Robust}. As such, the concepts of stability and controllability radii, i.e., the distance (in terms of matrix norms) between a system and its nearest unstable or uncontrollable systems since their first introduction \cite{hinrichsen1986stability, paige1981properties}, have received considerable attention in the control community \cite{WickDistance1991}.

Formally, for the following (possibly large-scale) linear time invariant system
\begin{equation} \label{StateSpace} \dot x(t) = Ax(t) + Bu(t),\end{equation}
where $x(t)\in {{\mathbb R}^{n}},u(t)\in {{\mathbb R}^{m}}$ are respectively state and input vectors, $A\in {{\mathbb R}^{n \times n}}$, and $B \in {{\mathbb R}^{n \times m}}$,
the controllability and stability radii are respectively defined as
$$\begin{array}{l} r_{con}(A,B) = \min \{ \left\|[A_\Delta,B_\Delta]\right\|: [A_\Delta,B_\Delta] \in {\mathbb F}^{n\times (n+m)}, \\ (A+A_{\Delta},B+B_\Delta){\kern 1pt} {\kern 1pt} {\kern 1pt} {\rm{is}}{\kern 1pt} {\kern 1pt} {\kern 1pt} {\rm{uncontrollable}}\},\end{array}$$
$$r_{stb}(A) = \min \{ \left\|A_\Delta\right\|: A_\Delta \in {\mathbb F}^{n\times { n}}, A+A_{\Delta} {\kern 1pt} {\kern 1pt} {\kern 1pt} {\rm{is}}{\kern 1pt} {\kern 1pt} {\kern 1pt} {\rm{unstable}}\},$$
where ${\mathbb F}={\mathbb C}$ or $\mathbb{R}$, and $||\cdot||$ is the $2$-norm or Frobenius norm (F-norm in abbreviation). A related notion to the controllability radius is the stabilizability radius, which is defined similarly to the former one.

When there is no structural constraint on $[A_\Delta, B_\Delta]$, depending on $\mathbb{F}$, the study of controllability radius could be classified into two categories. For the complex, unstructured controllability radius, \cite{R.E1984Between} revealed an algebraic expression of $r_{con}(A,B)$ (in this case, $r_{con}(A,B)$ in terms of the $2$-norm and F-norm are equivalent). An $O(n^4)$ bisection algorithm for computing $r_{con}(A,B)$ to any prescribed accuracy was given in \cite{gu2006fast}.  For the real, unstructured controllability radius, \cite{Hu2004Real} gave an algebraic formula that enables computing the exact $r_{con}(A,B)$ in terms of the $2$-norm via the grid search over the complex plane. A method based on the structured total least square algorithm on the extended controllability matrix was carried out later in \cite{khare2012computing}.\footnote{The method in \cite{khare2012computing} could be trivially extended to the structured case.} Similarly, for $A_\Delta$ unstructured, the study on stability radius could also be classified into two categories depending on which field the perturbation belongs to. Early researches that investigated the complex, unstructured stability radius include \cite{hinrichsen1986stability,byers1988bisection,boyd1990regularity} and the references in \cite{hinrichsen1990real}. For $A_\Delta$ real but otherwise unstructured, \cite{freitag2014new} proposed an implicit determinant method with quadratic convergence, while \cite{rostami2015new} presented iterative algorithms relying on computing the rightmost eigenvalues of a sequence of matrices.

The simplest case of real structured stability radius (RSSR) problems could date back to \cite{hinrichsen1986stability,qiu1993formula}, where $A_{\Delta}=E\Delta H$, with $E,H$ given and $\Delta$ unknown but {\emph{full}}. For this case, \cite{qiu1993formula} established a well-known $\mu$-value related formula, which enabled computing the exact $r_{stb}(A)$ in terms of the $2$-norm via a global search over the imaginary axis. Later, \cite{qiu1995computation,sreedhar1996fast} extended this approach by limiting the search space to certain intervals or level-sets.

Recently, the more realistic case where $A_\Delta$ depends affinely on the {\emph{perturbed parameters}} has received increasing attention  \cite{hinrichsen1990real,johnson2018structured,bianchin2016observability,katewa2020real}, where $A_{\Delta}=\sum \nolimits_{i=1}^p \theta_iA_i$, for $A_i$ given and $\theta_i\in {\mathbb{R}}$ unknown (
a special case is $A_{\Delta}=E\Delta H$, with $\Delta$ having a prescribed zero-nonzero sparsity pattern, which is called the {\emph{multiple perturbation structure}} in \cite{hinrichsen1990real}). Particularly,  \cite{bianchin2016observability,katewa2020real} presented Lagrange multiplier based characterizations and gradient-based iterative algorithms for the local optima of the real structured controllability radius (RSCR; observability by duality) and RSSR problems. \cite{johnson2018structured} provided an iterative algorithm to find local optima for the smallest additive structured perturbation for which a system property (e.g., controllability, stability) fails to hold. All of these algorithms are tailored for the F-norm based problems.

A remarkable feature of all the publications above is that, they deal with the $2$-norm and F-norm based RSCR and RSSR problems separately, unless in the complex, unstructured case where they are inherently equivalent. This is rather reasonable considering that the $2$-norm and F-norm have considerably different characteristics.  In this paper, instead, we will for the first time propose algorithms that could deal with the two different matrix norm based RSSR (and RSCR) problems in a unified way. Moreover,  for the RSSR and RSCR, similar to \cite{johnson2018structured}, our algorithms share the same framework.

More specifically, following \cite{johnson2018structured,bianchin2016observability,katewa2020real}, we consider the RSSR, RSCR, as well as the real structured stabilizability radius (RSSZR, defined similarly to the RSCR) problems in the case where the perturbations are {\emph{affinely parameterized by the perturbed parameters}}. Such an affine structure is frequently used in describing how the system matrices are affected by the elementary (physical, chemical, or biological) parameters \cite{Morse_1976,Anderson_1982,zhou1996robust,zhang2019structural,zhang2021structural,KarowStructured2009}. Under this framework, we first answer a fundamental problem: what are the computational complexities of those problems. We show that the RSCR and RSSZR with the affine structure are both NP-hard, and even deciding their feasibilities are. This means the hardness results are irrespective of what norms are utilized (even beyond the $2$-norm and F-norm). To our knowledge, no complexity claim has been made for the RSCR and RSSZR before. As for the RSSR, we prove the NP-hardness of checking the feasibility of a problem that is closely related to it; that is, the problem of determining whether an affinely structured matrix has a prescribed eigenvalue in the imaginary axis. This problem is relevant if we are to extend several existing algorithms developed for the RSSR in the literature \cite{qiu1993formula,qiu1995computation,sreedhar1996fast,freitag2014new} to the affine case. The NP-hardness of the addressed RSSR problem with respect to the $2$-norm is obtained from a known result in robust stability analysis \cite{nemirovskii1993several} (but it seems not easy to extend this result to show the NP-hardness of the F-norm based RSSR).

Next, by leveraging the low-rank structure of the RSSR, RSCR, and RSSZR, we propose {\emph {unified}} rank-relaxation based algorithms towards them. Those algorithms are inspired by the truncated nuclear norm based method for the low-rank matrix completions \cite{hu2012fast}, and are guaranteed to converge to stationary points of the relaxed problems. Furthermore, under suitable conditions, we improve our algorithms by adding an initialization stage and specifically designing the regularization parameter, so that they are assured to find local optima with performance specifications on the corresponding perturbations. Our proposed algorithms are valid both for the $2$-norm and F-norm based problems, which is a distinct feature from \cite{johnson2018structured}.  Numerical simulations show that, our algorithms, though in a simple framework, could find local optima as good as several existing methods.

To sum up, the main contributions of this paper are two-fold: we reveal several NP-hardness results arising in the RSSR, RSCR, and RSSZR with the general affine structure, which are irrespective of what matrix norm is adopted; we propose unified rank-relaxation based methods towards the three problems. The rest are organized as follows. Section \ref{problem-formulation} presents the problem formulations. Section \ref{complexity-analysis} provides several NP-hardness results related to the considered problems, while Section \ref{algorithms} gives the rank-relaxation based algorithms respectively for the RSCR, RSSZR, and RSSR. In Section \ref{examples}, several examples are given to demonstrate the effectiveness of the proposed methods. The last section concludes this paper.

\section{Problem formulations} \label{problem-formulation}

\subsection{Notations}
For a matrix $M\in {\mathbb R}^{n_1\times n_2}$, ${\bf \Lambda}(M)$ denotes the set of its eigenvalues (for $M$ square); $\sigma_i(M)$ denotes the $i$th largest singular value of $M$, $i=1,...,\min\{n_1,n_2\}$. We shall also denote the smallest singular value as $\sigma_{\min} (M)$. Let $||M||_F$ and $||M||_2$ be respectively the Frobenius norm (F-norm) and the $2$-norm of matrix $M$.  $I_n$ denotes the $n\times n$ identify matrix, and $0_{n_1\times n_2}$ the $n_1\times n_2$ zero matrix, where the subscripts are omitted if they can be inferred from the context. $|\cdot|$ takes the absolute value of a scalar. $\bf j$ is the imaginary unit.

\subsection{Problem formulations}

Consider a linear system described by (\ref{StateSpace}). It is known that the controllability of system (\ref{StateSpace}) requires ${\rm rank}[A-zI,B]=n$ for all $z\in {\mathbb C}$, while stability requires that all eigenvalues of $A$ have negative real parts \cite{kailath1980linear}. Moreover, the stabilizability of system (\ref{StateSpace}) is the ability to make the system stable using state feedback, requiring all uncontrollable modes of $(A,B)$ to be stable. To measure the robustness of these important system properties against additive perturbations on the system matrices, one common index is the distance (in terms of some matrix norms) between the original system and the set of systems without the corresponding property \cite{R.E1984Between,qiu1993formula}. For practical plants, entries of the state matrix $A$ and input matrix $B$ that can be perturbed may have structural constraints. Particularly, suppose the system matrices are parameterized by the vector of parameters $\theta\doteq [\theta_1,...,\theta_p]^{\intercal}\in {\mathbb R}^p$ as
\begin{equation} \label{affine}
A(\theta)=A+\sum \nolimits_{i=1}^p\theta_i A_i, B(\theta)=B+\sum \nolimits_{i=1}^p\theta_i B_i,
\end{equation}
where $A_i\in{\mathbb{R}}^{n\times n}$, $B_i\in {\mathbb{R}}^{n\times m}$. Define $A_\Delta\doteq \sum \nolimits_{i=1}^p\theta_i A_i$ and $B_\Delta \doteq \sum \nolimits_{i=1}^p\theta_i B_i$ as the corresponding perturbation matrices (perturbations for short). The affine parameterization (\ref{affine}) is common in the literature for describing how system matrices are affected by the elementary parameters $\theta$ \cite{Morse_1976,Anderson_1982,zhou1996robust,zhang2019structural,zhang2021structural,KarowStructured2009}. In particular, the following example shows perturbations of a subset of edges in a network can be naturally described by (\ref{affine}).

\begin{example}[Affine parameterization in networks]Suppose $A(\theta)$ is the Laplacian matrix of a $3$-node directed graph, written as follows
	$$ A(\theta)=\left[\begin{array}{ccc}
	\theta_{13} & 0  & -\theta_{13}  \\
	-\theta_{21} & \theta_{21} & 0\\
	0 & -\theta_{32} & \theta_{32} \\
	\end{array}\right],$$ where $\theta=[\theta_{13},\theta_{21},\theta_{32}]^\intercal$ is the set of weights of edges in the graph. Then, $A(\theta)$ can be rewritten as
	$${{\begin{array}{c} A(\theta)= \left[\begin{array}{ccc}
			1 & 0  & -1  \\
			0 & 0 & 0\\
			0 & 0 & 0 \\
			\end{array}\right]\theta_{13}+\left[\begin{array}{ccc}
			0 & 0  & 0  \\
			-1 & 1 & 0\\
			0 & 0 & 0 \\
			\end{array}\right]\theta_{21}+ \\ \left[\begin{array}{ccc}
			0 & 0  & 0  \\
			0 & 0 & 0\\
			0 & -1 & 1 \\
			\end{array}\right]\theta_{32}.\end{array}}}$$
	Furthermore, consider $A(\theta)$ as the adjacency matrix of a $3$-node undirected graph (without self-loops), written as follows
	$$A(\theta)=\left[\begin{array}{ccc}
	0 & \theta_{12}  & \theta_{13}  \\
	\theta_{12} & 0 & \theta_{23}\\
	\theta_{13} & \theta_{23} & 0 \\
	\end{array}\right].$$Then, $A(\theta)$ can be parameterized  as {\small
		$$A(\theta)\!=\!\left[\begin{array}{ccc}
		0 & 1  & 0  \\
		1 & 0 & 0\\
		0 & 0 & 0 \\
		\end{array}\right]\theta_{12}+\left[\begin{array}{ccc}
		0 & 0  & 1  \\
		0 & 0 & 0\\
		1 & 0 & 0 \\
		\end{array}\right]\theta_{13}+\left[\begin{array}{ccc}
		0 & 0  & 0  \\
		0 & 0 & 1\\
		0 & 1 & 0 \\
		\end{array}\right]\theta_{23}.$$}
\end{example}

Let ${\bf \Gamma}: {\mathbb R}^p\to {\mathbb R}^{t_1\times t_2}$ be a structure specification that maps $\theta$ from the vector space to the $t_1 \times t_2$ dimensional matrix space, such that ${\bf \Gamma}(\theta)\in {\mathbb R}^{t_1\times t_2}$ and each of its entry is an affine combination of $\{\theta_1,...,\theta_p\}\cup {\mathbb R}$. Assume without losing generality that ${\bf\Gamma}(\theta)\ne 0_{t_1\times t_2}$ unless $\theta=0_{p\times 1}$. The real structured controllability, stabilizability, and stability radii are defined respectively as follows:
\\~
\\~
{\bf real structured controllability radius (RSCR):}
$$\begin{array}{l} r_{con}(A(\theta),B(\theta)) = \min \{ \left\|{\bf \Gamma(\theta)}  \right\|: \theta \in {{\mathbb R}^{p}}, (A(\theta),B(\theta)){\kern 1pt} {\kern 1pt}  \\ {\rm{is}}{\kern 1pt} {\kern 1pt} {\kern 1pt} {\rm{uncontrollable}}\} ,\end{array}$$
{\bf real structured stabilizability radius (RSSZR):}
$$\begin{array}{l} r_{stz}(A(\theta),B(\theta)) = \min \{ \left\|{\bf \Gamma(\theta)} \right\|: \theta \in {{\mathbb R}^{p}},  (A(\theta),B(\theta)) \\ {\kern 1pt} {\kern 1pt} {\kern 1pt} {\rm{is}}{\kern 1pt} {\kern 1pt} {\kern 1pt} {\rm{unstabilizable}}\},\end{array}$$
{\bf real structured stability radius (RSSR):}
$$ r_{stb}(A(\theta)) = \min \{ \left\| {\bf \Gamma(\theta)} \right\|: \theta \in {{\mathbb R}^{p}}, A(\theta){\kern 1pt} {\kern 1pt} {\kern 1pt} {\rm{is}}{\kern 1pt} {\kern 1pt} {\kern 1pt} {\rm{unstable }}\} ,$$
in which $||\cdot||$ is the $2$-norm or F-norm. Some typical forms of ${\bf \Gamma}(\theta)$ are
\begin{itemize}
	\item Form (i): ${\bf \Gamma}(\theta)$ is a full matrix whose vectorization (column-wise) is $\theta$;
	\item Form (ii): ${\bf \Gamma}(\theta)$ is a $p\times p$ diagonal matrix whose $i$th diagonal is $\theta_i$;
	\item Form (iii):  ${\bf \Gamma}(\theta)\equiv\theta$.
\end{itemize}It is clear that, for the F-norm, the above three cases are equivalent. In contrast, for the $2$-norm in Form (ii), the corresponding problem then reduces to finding the parameter $\theta$ with its largest element in magnitude as small as possible while the respective property is broken (equivalently, $||{\bf \Gamma}(\theta)||=||\theta||_{\infty}$, with $||\cdot||_{\infty}$ being the $l$-$\infty$ norm). With Form (iii), we have $||{\bf \Gamma}(\theta)||_F=||{\bf \Gamma}(\theta)||_2$.  Hereafter, to ease the description, when the F-norm is used, the corresponding $r_{\star}(\cdot)$ is denoted by $r^F_{\star}(\cdot)$, and $r^S_{\star}(\cdot)$ when the $2$-norm (i.e., the spectral norm) is used, $\star=con, stb$ or $stz$. We may drop `$\cdot$' from $r^S_{\star}(\cdot)$ and $r^F_{\star}(\cdot)$ when the corresponding `$\cdot$' is clear from the context. 

We say the RSCR (RSSZR, RSSR) problem is {\emph{feasible}}, if there exists a $\theta\in {\mathbb{R}}^p$ making the corresponding perturbed system uncontrollable (unstabilizable, unstable).

\section{Complexity analysis} \label{complexity-analysis}
This section provides complexity results of the RSCR and RSSZR  with the general affine structure, as well as a problem that is closely related to the RSSR. Those results are irrespective of what ${\bf \Gamma}(\theta)$ and matrix norm are used.  The complexity of the $2$-norm based RSSR can be obtained from a known result in robust stability analysis.

\begin{theorem} \label{complexity}
	Given $(A(\theta),B(\theta))$ in (\ref{affine}), determining \\~ $r_{con}(A(\theta),B(\theta))$ is NP-hard, and even checking the feasibility of the RSCR problem is NP-complete.
\end{theorem}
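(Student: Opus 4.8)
The plan is to separate the norm-dependent quantity (the radius) from the norm-free combinatorial core (feasibility). Observe first that whether the RSCR is feasible does not depend on $||\cdot||$ at all: it asks only whether some $\theta\in{\mathbb R}^p$ renders $(A(\theta),B(\theta))$ uncontrollable. Moreover, under the convention that a minimum over the empty set is $+\infty$, one has $r_{con}(A(\theta),B(\theta))<\infty$ if and only if the RSCR is feasible. Hence any procedure that evaluates $r_{con}$, or even decides its finiteness, also decides feasibility, regardless of which matrix norm is used. Consequently, once feasibility is shown to be NP-complete, the NP-hardness of computing $r_{con}$ follows for every norm as an immediate corollary, and I would arrange the proof so that essentially all the work goes into the feasibility problem.

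For membership in NP, I would use the PBH characterization: $(A(\theta),B(\theta))$ is uncontrollable iff there exist $z\in{\mathbb C}$ and $0\ne w\in{\mathbb C}^n$ with $w^{H}(A(\theta)-zI)=0$ and $w^{H}B(\theta)=0$. A verifier only needs a value of the parameter $\theta$, since controllability of a numerical pair is checkable in polynomial time via the rank of $[B(\theta),A(\theta)B(\theta),\dots,A(\theta)^{n-1}B(\theta)]$. The delicate point is that a feasible instance must admit a witness $\theta$ of polynomial bit-size. I would establish this from the algebraic structure: the set of feasible $\theta$ is the real zero set of the $n\times n$ minors of the controllability matrix, which are integer polynomials of polynomially bounded degree and height, so standard real–algebraic bounds guarantee that a nonempty such set contains a point of polynomial description. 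Turning this into a genuinely polynomial-time-checkable (rational, or explicitly bounded algebraic) certificate is the main subtlety of the membership part.

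The heart of the proof is NP-hardness, for which I would reduce from a binary combinatorial problem such as PARTITION / subset-sum (equivalently, deciding whether a given integer matrix has a $\{-1,+1\}$ vector in its kernel, which is NP-complete already for a single row). The guiding observation is that the PBH conditions $w^{H}(A(\theta)-zI)=0$ and $w^{H}B(\theta)=0$ are biaffine in $(\theta,w)$, so a rank-drop can be made to encode quadratic, hence discrete, constraints. Concretely, I would build a block-structured $A(\theta)$ together with a common target eigenvalue $z$ so that a left null vector can exist only when, block by block, each $\theta_i$ is pinned to a two-element set, using per-block singularity conditions whose determinants factor (e.g.\ as $\theta_i(\theta_i-1)$ or $\theta_i^2-1$), while $B(\theta)$ is chosen so that the supporting null vector must additionally satisfy a single linear identity encoding the subset-sum target. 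The construction is clearly polynomial in the instance size, and one then proves the equivalence: a satisfying assignment yields a common left null vector (feasible RSCR), and conversely any feasible $\theta$ forces, through the gadget, a valid assignment.

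The main obstacle is precisely this discreteness encoding. Because $\theta$ ranges over all of ${\mathbb R}^p$, an unstructured affine family can easily contain uncontrollable pairs for trivial reasons, so the gadgets must be designed so that their only real rank-dropping configurations correspond to the admissible discrete assignments, with no spurious continuous solutions creeping in. Verifying the ``only if'' direction — that every feasible $\theta$, together with its accompanying $z$ and $w$, is forced into the intended combinatorial pattern — is where the argument needs the most care, and it must be carried out purely through the rank/eigenvector condition so that the resulting NP-hardness remains independent of the norm, as claimed.
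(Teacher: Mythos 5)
Your overall strategy---reduce from subset sum, pin each $\theta_i$ to $\{0,1\}$ via rank/determinant gadgets that factor as $\theta_i(\theta_i-1)$, and obtain norm-independent NP-hardness of computing $r_{con}$ as a corollary of NP-hardness of feasibility---is exactly the route the paper takes, and your observation that evaluating the radius decides feasibility for any norm is sound. Your caveat about NP membership (the need for a polynomially sized witness $\theta$) is also legitimate; the paper is in fact more casual than you are here, simply asserting that controllability of a given numerical pair is checkable in polynomial time.

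The problem is that the step you yourself flag as ``the main obstacle''---excluding spurious continuous rank drops and controlling the PBH eigenvalue $z$---is precisely where your proposal stops, and that is the substantive content of the proof. The paper closes this gap with two concrete devices that your sketch lacks. First, $A(\theta)$ is built with all entries off the subdiagonal equal to zero, so it is nilpotent for \emph{every} $\theta$; the PBH test then collapses to the single condition $\mathrm{rank}[A(\theta),B(\theta)]<n$ at $z=0$, and the issue of which eigenvalue witnesses uncontrollability disappears entirely. The subdiagonal entries are $1+\sum_{i}\theta_i a_i,\ 1-\theta_1,\theta_1,\dots,1-\theta_p,\theta_p$, so $\mathrm{rank}\,A(\theta)$ drops to $p$ exactly when a $\{0,1\}$ assignment solves the subset-sum instance with target $-1$; note that the target sum is encoded directly as an entry of $A(\theta)$, not through a linear identity imposed on the null vector via $B(\theta)$ as you propose. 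Second, $B$ is a \emph{constant} $(2p+2)\times(p+1)$ Vandermonde matrix, so that any $p+1$ of its rows are linearly independent; combined with the subdiagonal structure of $A(\theta)$ this forces $\mathrm{rank}[A(\theta),B(\theta)]=2p+2$ whenever $\mathrm{rank}\,A(\theta)\ge p+1$, which is exactly the ``only if'' direction you identify as needing the most care. Without these (or equivalent) choices, your gadget description does not yet constitute a proof.
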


To prove Theorem \ref{complexity}, we shall give a reduction from the NP-complete subset sum problem to an instance of the RSCR problem.

\begin{definition}[Subset sum problem \cite{computation_complexity_modern_2009}]
	Given a set $S$ of $p$ integers $S=\{a_1,...,a_p\}$ and a target sum $T\in {\mathbb R}$, the subset sum problem is to determine whether there exists a nonempty subset of $S'\subseteq S$ whose sum is exactly $T$, i.e., $\sum \nolimits_{a_i\in S'} a_i=T$. If the answer is yes, we say this problem is feasible.
\end{definition}

{Our reduction is based on the following observations.
	Given a set $S$ of $p$ integers $S=\{a_1,...,a_p\}$ and a target sum $T\in {\mathbb R}$, an equivalent statement of the subset sum problem is determining whether there is a nonzero vector $\theta\in \{0,1\}^p$ such that $\sum \nolimits_{i=1}^p\theta_i a_i=T$. To restrict $\theta_i\in \{0,1\}$, it suffices to let ${\rm rank}{\tiny \left[\begin{array}{cc} 1-\theta_i & 0 \\ 0 & \theta_i \end{array} \right]}=1$. With these observations, our initial idea in the reduction is to construct the matrix $A(\theta)$ such that its secondary diagonal entries are $T+\sum \nolimits_{i=1}^p\theta_i a_i$, $1-\theta_1$, $\theta_1$,...,$1-\theta_p$, and $\theta_p$, while its main diagonal entries are all zero. Such $A(\theta)$ has only zero eigenvalues. Suppose we have constructed $B(\theta)$, then $(A(\theta),B(\theta))$ is uncontrollable, if and only if ${\rm rank}[A(\theta),B(\theta)]<2p+2$, where $2p+2$ is the number of rows of $A(\theta)$. The construction of $B(\theta)$ should ensure that, the subset sum problem is feasible, if and only if  $[A(\theta),B(\theta)]<2p+2$. To make $B(\theta)$ meet this requirement, it suffices to make every $p+1$ rows of $B(\theta)$ linearly independent. To this end, we resort to the Vandermonde matrix because it exactly possesses such a nice property \cite{R.A.1991Topics}. The details are given as follows. }

{\bf Proof of Theorem \ref{complexity}} Let $S=\{a_1,...,a_p\}$ be a set of $p$ integers and let the target sum $T=-1$. Let $\theta=[\theta_1,...,\theta_p]^{\intercal}$. Construct the $(2p+2)\times (2p+2)$ lower-triangular matrix $A(\theta)$ as
\begin{equation}{\footnotesize\label{A-construct} A(\theta)\!=\!\left[
	\begin{array}{cccccccc}
	0 & 0 & 0 & 0 & \cdots & 0 & 0 & 0 \\
	1+\sum \limits_{i=1}^p \theta_ia_i & 0 & 0 & 0 & \cdots & 0 & 0 & 0 \\
	0 & 1-\theta_1 & 0 & 0 & \cdots & 0 & 0 & 0\\
	0 & 0 & \theta_1 & 0 & \cdots & 0 & 0 & 0\\
	\vdots & \vdots & \vdots & \vdots & \ddots & \vdots & \vdots & \vdots  \\
	0 & 0 & 0 & 0 & \cdots & 1-\theta_P & 0 & 0 \\
	0 & 0  &  0 & 0  & \cdots & 0 & \theta_p & 0 \\
	\end{array}
	\right].}\end{equation}
Let $B(\theta)$ be the $(2p+2)\times (p+1)$ Vandermonde matrix with $2p+2$ distinct scalars $\beta_1,...,\beta_{2p+2}$:
\begin{equation}\label{B-construct}{\small B(\theta)=\left[
	\begin{array}{ccccc}
	1 & \beta_1 & \beta_1^2 & \cdots & \beta_1^p \\
	1 & \beta_2 & \beta_2^2 & \cdots & \beta_2^p \\
	\vdots & \vdots & \vdots & \ddots & \vdots \\
	1 & \beta_{2p+2} & \beta_{2p+2}^2 & \cdots & \beta^{p}_{2p+2} \\
	\end{array}
	\right].}
\end{equation}
We know that for any subset $V\subseteq \{1,...,2p+2\}$ with $|V|=p+1$,  $\det B(\theta)_V= \prod \nolimits_{1\le i< j\le 2p+2, i,j\in V} (\beta_j-\beta_i)$, where $B(\theta)_V$ is the submatrix with rows indexed by $V$. Hence, any $p+1$ rows of $B(\theta)$ are {\emph{linearly independent}}.

We claim that the RSCR problem on $(A(\theta), B(\theta))$ is feasible, if and only if there is a subset $S'\subseteq S$ such that $\sum \nolimits_{a_i\in S'} a_i=-1$.

{\bf If:} Since $A(\theta)$ has only zero eigenvalues, by the PBH test, to make $(A(\theta), B(\theta))$ uncontrollable, it suffices to make ${\rm rank}[A(\theta),B(\theta)]< 2p+2$. If the subset sum problem with $T=-1$ is feasible on a subset $S'\subseteq S$, by setting $\theta_i=1$ if $a_i\in S'$ and $\theta_i=0$ otherwise, we have ${\rm rank} A(\theta)=p$. Then, ${\rm rank}[A(\theta), B(\theta)]\le {\rm rank} A(\theta) + {\rm rank} B(\theta)= p+p+1< 2p+2$, meaning that $(A(\theta), B(\theta))$ is uncontrollable.

{\bf Only if:} If the subset sum problem over $S$ is infeasible, then it is easy to see for all $\theta\in {\mathbb R}^p$, we have ${\rm rank} A(\theta)\ge p+1$.
In this circumstance, by some row permutations on $[A(\theta), B(\theta)]$, it is always possible to get $[A'(\theta), B'(\theta)]$ such that it contains a $(2p+2)\times(2p+2)$ submatrix, which is upper block triangular with two blocks, the first of which consists of $p+1$ rows of $A(\theta)$ with rank $p+1$, and the second of which consists of $p+1$ rows of $B(\theta)$ also with rank $p+1$ (by the Vandermonde matrix construction). Consequently, we get ${\rm rank}[A(\theta),B(\theta)]=2p+2$, for all $\theta\in {\mathbb R}^p$. This leads to the infeasibility of the RSCR problem on $(A(\theta), B(\theta))$.

Since ${\rm rank}A(0)=p+1$, it follows from the {\bf only if} direction that $(A(0),B(0))$ is controllable when $\theta = 0_{p\times 1}$.
Note for a given $\theta\in {\mathbb R}^p$, we can check the controllability of $(A(\theta), B(\theta))$ in polynomial time (for example, checking the full rank of the controllability matrix). Since the reduction above, i.e., the construction of $A(\theta)$ and $B(\theta)$, is in polynomial time, and the subset sum problem is NP-complete, we conclude that checking the feasibility of the RSCR problem is NP-complete. The rest of the statements in Theorem \ref{complexity} follow immediately.  {\hfill $\square$\par}

						An immediate result of the proof for Theorem \ref{complexity} gives the complexity of the RSSZR problem.
						\begin{corollary} \label{complexity-stab}
							Given $(A(\theta),B(\theta))$ in (\ref{affine}), checking the feasibility of the RSSZR problem is NP-hard. Consequently, determining $r_{stz}(A(\theta),B(\theta))$ is NP-hard.
						\end{corollary}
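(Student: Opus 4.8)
The plan is to recycle verbatim the reduction built for Theorem \ref{complexity}, using exactly the same matrices $A(\theta)$ in (\ref{A-construct}) and $B(\theta)$ in (\ref{B-construct}). The entire argument then rests on a single structural observation: for this particular family, the pair $(A(\theta),B(\theta))$ is unstabilizable \emph{if and only if} it is uncontrollable. Once this is established, the equivalence between the feasibility of the RSSZR and the feasibility of the subset sum problem is inherited directly from the proof of Theorem \ref{complexity}, and the NP-hardness conclusion follows.

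To see the equivalence, I would invoke the PBH characterization of stabilizability: $(A,B)$ is unstabilizable iff there exists $z\in{\mathbb C}$ with ${\rm Re}(z)\ge 0$ such that ${\rm rank}[A-zI,B]<n$, and any such $z$ is necessarily an eigenvalue of $A$. The key point is that $A(\theta)$ in (\ref{A-construct}) is lower triangular with an all-zero main diagonal, so ${\bf \Lambda}(A(\theta))=\{0\}$ for every $\theta$. Since $0$ lies on the imaginary axis, hence in the closed right half-plane, the only candidate uncontrollable mode that can violate stabilizability is $z=0$. Consequently, $(A(\theta),B(\theta))$ is unstabilizable iff ${\rm rank}[A(\theta)-0\cdot I,B(\theta)]={\rm rank}[A(\theta),B(\theta)]<2p+2$, which is precisely the condition under which $(A(\theta),B(\theta))$ is uncontrollable.

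With this equivalence in hand, both directions follow immediately from the proof of Theorem \ref{complexity}. If the subset sum instance with $T=-1$ is feasible, choosing $\theta\in\{0,1\}^p$ as before forces ${\rm rank}[A(\theta),B(\theta)]<2p+2$, so the pair is uncontrollable and therefore, by the observation above, unstabilizable; hence the RSSZR is feasible. Conversely, if the subset sum instance is infeasible, then ${\rm rank}[A(\theta),B(\theta)]=2p+2$ for all $\theta$, so $(A(\theta),B(\theta))$ is controllable, hence stabilizable, for all $\theta$, making the RSSZR infeasible. Since the construction is polynomial and the subset sum problem is NP-complete, checking the feasibility of the RSSZR is NP-hard, and the NP-hardness of determining $r_{stz}(A(\theta),B(\theta))$ follows at once.

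The main (and essentially only) obstacle is establishing that unstabilizability collapses onto uncontrollability for this construction; this hinges entirely on every eigenvalue of $A(\theta)$ sitting exactly at the origin, which lies on the imaginary axis and thus fails to be a stable mode. Everything else is inherited from Theorem \ref{complexity}. I would claim only NP-hardness here (rather than NP-completeness as for the RSCR), since that is what the reduction delivers directly; certifying membership in NP would separately require verifying unstabilizability of a candidate $\theta$, which the above eigenvalue observation makes easy for these instances but need not be addressed for the stated hardness result.
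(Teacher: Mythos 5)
Your proposal is correct and follows essentially the same route as the paper: both rest on the observation that $A(\theta)$ in (\ref{A-construct}) has only the eigenvalue $0$ for every $\theta$, which lies in the closed right half-plane, so uncontrollability and unstabilizability coincide for this construction and the reduction of Theorem \ref{complexity} carries over verbatim. Your explicit invocation of the PBH stabilizability test just spells out what the paper leaves implicit.
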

						\begin{proof}
							Note in the proof of Theorem \ref{complexity}, all the eigenvalues of $A(\theta)$ are all zero, which cannot be altered by changing $\theta$. Hence, the controllability of $(A(\theta), B(\theta))$ is equivalent to the stabilizability of $(A(\theta), B(\theta))$ (in fact, one can further change the zero diagonals of $A(\theta)$ in (\ref{A-construct}) to any positive values), indicating that the RSSZR problem and RSCR problem on $(A(\theta), B(\theta))$ are equivalent. This proves the claim with Theorem \ref{complexity}.
						\end{proof}
						
						In the literature, the controllability and stabilizability radii in the real field and in the complex field have very different characterizations \cite{R.E1984Between,Hu2004Real}. Nevertheless, as a direct corollary of Theorem \ref{complexity} and Corollary \ref{complexity-stab}, we could still answer the complexity of the aforementioned problems in the complex field with the general affine structure.
						\begin{corollary} \label{complexity-stab-complex}
							Given $(A(\theta),B(\theta))$ in (\ref{affine}), checking the feasibility of the RSCR problem and the RSSZR problem are both NP-hard, even when $\theta$ can take complex values (i.e., $\theta\in {\mathbb C}^p$). 
						\end{corollary}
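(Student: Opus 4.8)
The plan is to show that the very same polynomial-time reduction used in the proof of Theorem~\ref{complexity} already certifies NP-hardness in the complex case; the only thing requiring verification is that enlarging the parameter domain from $\mathbb{R}^p$ to $\mathbb{C}^p$ does not introduce spurious complex witnesses that would make the system uncontrollable without a valid subset sum. Concretely, I would reuse the matrices $A(\theta)$ and $B(\theta)$ from (\ref{A-construct})--(\ref{B-construct}) verbatim and claim that, even allowing $\theta\in\mathbb{C}^p$, the system $(A(\theta),B(\theta))$ is uncontrollable for some $\theta$ if and only if the subset sum instance $S$ with target $T=-1$ is feasible.

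First I would note that $A(\theta)$ is strictly lower triangular for every $\theta\in\mathbb{C}^p$, so its spectrum is $\{0\}$ irrespective of the field; hence the PBH test, which holds over $\mathbb{C}$, reduces the uncontrollability check to $\mathrm{rank}[A(\theta),B(\theta)]<2p+2$ evaluated at $z=0$. Since only the subdiagonal of $A(\theta)$ can be nonzero, its rank equals the number of nonvanishing subdiagonal entries, a count that is field-independent. The key observation is that for each $i$ the entries $1-\theta_i$ and $\theta_i$ cannot both vanish over $\mathbb{C}$ (that would force $\theta_i=1$ and $\theta_i=0$ simultaneously), so each pair contributes at least one nonzero entry and therefore $\mathrm{rank}\,A(\theta)\ge p$ for all $\theta\in\mathbb{C}^p$.

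The crucial, and essentially only nontrivial, step is the ``only if'' direction over $\mathbb{C}$: $\mathrm{rank}\,A(\theta)=p$ forces every pair to contribute exactly one nonzero entry, i.e. $\theta_i(1-\theta_i)=0$, and simultaneously forces the top entry $1+\sum_{i=1}^p\theta_i a_i=0$. Now $\theta_i(1-\theta_i)=0$ has solution set $\{0,1\}$ over $\mathbb{C}$ exactly as over $\mathbb{R}$, so the complex constraints collapse to $\theta\in\{0,1\}^p$, whence $1+\sum_i\theta_i a_i=0$ becomes the integer equation $\sum_{\theta_i=1}a_i=-1$. Thus $\mathrm{rank}\,A(\theta)=p$ is achievable over $\mathbb{C}$ if and only if the subset sum instance is feasible. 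When it is, $\mathrm{rank}[A(\theta),B(\theta)]\le p+(p+1)<2p+2$, using that any $p+1$ rows of the Vandermonde $B(\theta)$ remain linearly independent over $\mathbb{C}$; when it is not, $\mathrm{rank}\,A(\theta)\ge p+1$ for all complex $\theta$, and the block-triangular/Vandermonde argument of Theorem~\ref{complexity}---which invokes only the determinant formula valid over any field---yields $\mathrm{rank}[A(\theta),B(\theta)]=2p+2$ for every $\theta\in\mathbb{C}^p$. This establishes NP-hardness of complex RSCR feasibility; the RSSZR statement then follows exactly as in Corollary~\ref{complexity-stab}, since all eigenvalues of $A(\theta)$ equal $0$, so any uncontrollable mode is unstable and stabilizability coincides with controllability on this instance.

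I expect no genuine obstacle here: the reduction is field-agnostic, and the single point deserving care is precisely that the rank-one gadget $\mathrm{diag}(1-\theta_i,\theta_i)$ still pins $\theta_i$ to $\{0,1\}$ over $\mathbb{C}$---which it does, because $\{0,1\}$ is the full complex root set of $z(1-z)$. Consequently the complex corollary follows immediately from the structure of the original reduction, and the main work is simply to record that each ingredient (the PBH test, the subdiagonal rank count, and the Vandermonde determinant identity) is insensitive to whether the scalars range over $\mathbb{R}$ or $\mathbb{C}$.
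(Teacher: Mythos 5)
Your proposal is correct and takes essentially the same approach as the paper: the paper's entire proof of this corollary is the one-line observation that replacing $\theta\in\mathbb{R}^p$ by $\theta\in\mathbb{C}^p$ leaves every step of the reduction in Theorem~\ref{complexity} valid. You have simply made explicit the field-independence of each ingredient (the triangular spectrum, the subdiagonal rank count, the fact that $z(1-z)$ has root set $\{0,1\}$ over $\mathbb{C}$, and the Vandermonde determinant identity), which the paper leaves implicit.
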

						
						\begin{proof}
							We can change $\theta\in {\mathbb{R}}^p$ to $\theta\in {\mathbb{C}}^p$ so that all the remaining statements in the proof of Theorem \ref{complexity} are still valid.
						\end{proof}
						
						It is notable that the hardness results in
						Theorem \ref{complexity} and Corollaries \ref{complexity-stab}-\ref{complexity-stab-complex} are irrespective of what form ${\bf \Gamma}(\theta)$ and matrix norm are utilized in the definitions. These results may explain why there is a lack of an efficient algorithm that could find the global optima of the RSCR and RSSZR problems (or even check the feasibility). 
						
						The NP-hardness of the addressed RSSR with respect to the $2$-norm can be obtained from a known result in robust stability analysis, shown as follows. However, it seems the complexity of the F-norm based RSSR still needs further investigation. 
						\begin{proposition}
							Given $A(\theta)$ in (\ref{affine}), the $2$-norm based RSSR with ${\bf \Gamma}(\theta)$ in Form (ii) is NP-hard.
						\end{proposition}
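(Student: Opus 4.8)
The plan is to reduce the robust Hurwitz stability problem over a hypercube, whose hardness is established in \cite{nemirovskii1993several}, to the computation of $r^S_{stb}(A(\theta))$. The first step is a structural observation: with ${\bf \Gamma}(\theta)$ in Form (ii) and the $2$-norm, the structure norm collapses to $\|{\bf \Gamma}(\theta)\|_2=\max_i|\theta_i|=\|\theta\|_\infty$, so that
\[ r^S_{stb}(A(\theta))=\min\{\|\theta\|_\infty:\theta\in{\mathbb R}^p,\ A(\theta)\ \text{is unstable}\}. \]
Hence the decision problem naturally attached to this radius is whether there exists a $\theta$ with $\|\theta\|_\infty\le 1$ such that $A(\theta)$ is unstable. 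Its negation asks whether the entire affine family $\{A+\sum_{i}\theta_iA_i:\theta\in[-1,1]^p\}$ is Hurwitz stable, i.e.\ robustly stable over the unit box.

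Second, I would import the hardness and align the parameterization. \cite{nemirovskii1993several} shows that deciding robust stability of an affinely parameterized matrix family over a hypercube is co-NP-hard; equivalently, detecting a destabilizing parameter inside the box is NP-hard. To match Form (ii) exactly, I would take the hard instance with \emph{symmetric} data $A,A_1,\dots,A_p$, for which all eigenvalues are real so that Hurwitz stability coincides with negative definiteness, and absorb the box widths into the $A_i$ so that every coordinate ranges over $[-1,1]$, matching the constraint $\|\theta\|_\infty\le 1$. This produces a polynomial-time map from a Nemirovskii instance to an RSSR instance of the required form.

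Third comes the threshold argument tying the radius to the decision. Since $\max_k\mathrm{Re}\,\lambda_k(A(\theta))$ is continuous in $\theta$, instability is the closed condition $\max_k\mathrm{Re}\,\lambda_k(A(\theta))\ge 0$, so the set of destabilizing $\theta$ is closed; as $\|\cdot\|_\infty$ is coercive, the minimum defining $r^S_{stb}$ is attained whenever it is finite. If the family is robustly stable over $[-1,1]^p$, every destabilizing $\theta$ obeys $\|\theta\|_\infty>1$, giving $r^S_{stb}>1$; if some $\theta^\star\in[-1,1]^p$ destabilizes, then $r^S_{stb}\le\|\theta^\star\|_\infty\le 1$. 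Thus robust stability holds iff $r^S_{stb}>1$, and a single comparison of the computed radius against $1$ decides the Nemirovskii instance. Any exact algorithm for $r^S_{stb}$ therefore solves an NP-hard problem, which proves the claim.

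The delicate point is not the reduction logic but the bookkeeping around the boundary and the exact form of the imported result: I must allow the destabilizing parameter to lie on the boundary $\|\theta\|_\infty=1$ (handled by the non-strict condition $\ge 0$ on the real parts), and I must invoke \cite{nemirovskii1993several} in its continuous-time, negative-definite guise rather than a Schur or generic interval-matrix variant. It is worth remarking why the argument is specific to the $2$-norm: in Form (ii) the F-norm gives $\|{\bf \Gamma}(\theta)\|_F=\|\theta\|_2$, so the corresponding feasible region is a Euclidean ball rather than the hypercube underlying Nemirovskii's construction, which is precisely why the F-norm based RSSR does not inherit the hardness by this route.
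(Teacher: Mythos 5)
Your proposal is correct and follows essentially the same route as the paper: both reduce from Nemirovskii's NP-hardness result for robust stability of interval matrices over the unit box, use the fact that $\|{\bf \Gamma}(\theta)\|_2=\|\theta\|_\infty$ in Form (ii), and conclude that deciding $r^S_{stb}>1$ is NP-hard. Your extra care about attainment on the boundary and the precise variant of the imported result is sound but goes beyond what the paper's one-paragraph argument spells out.
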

						
						\begin{proof}
							Proposition 2.1 of \cite{nemirovskii1993several} shows that it is NP-hard to decide whether all representatives of an interval matrix (i.e., a matrix with partially fixed constant entries and the rest in the interval $[-1,1]$) are stable. Let $A(\theta)$ be the affine parameterization of the aforementioned interval matrix. It then follows that, $r^{S}_{stb}(A(\theta))>1$ with ${\bf \Gamma}(\theta)$ of Form (ii), if and only if all representatives of $A(\theta)$ with $||\theta||_\infty\le 1$ are stable. Hence, deciding whether $r^{S}_{stb}(A(\theta))>1$ with ${\bf \Gamma}(\theta)$ of Form (ii) is NP-hard, leading to the required statement.
						\end{proof}

						Note provided $A$ is stable, due to the continuous dependence of eigenvalues on the matrix entries, the RSSR problem can be formulated as \cite{qiu1993formula}
						\begin{equation} \label{stable-formula}
							r_{stb}(A(\theta)) \!=\! \min \{ \left\| {\bf \Gamma}(\theta) \right\|: \theta \in {{\mathbb R}^{p}},  \exists w\in {\mathbb R}\ s.t.\  {\bf j}w\in {\bf \Lambda}(A(\theta))\},
						\end{equation}
						where $\bf j$ is the imaginary unit, recalling ${\bf \Lambda}(\cdot)$ is the set of eigenvalues.

						{ In the following, we prove the complexity of a problem that is closely related to the RSSR problem, namely, the problem of determining whether there is a $\theta\in {\mathbb R}^p$ such that $A(\theta)$ has a prescribed eigenvalue in the imaginary axis.  Similar to the proof of Theorem \ref{complexity}, we again resort to a reduction from the subset sum problem to an instance of the aforementioned problem. Given a set $S$ of $p$ integers $S=\{a_1,...,a_p\}$, our key idea is to construct a matrix $A(\theta)$ in such a way that $\det A(\theta)=\theta_1^2(\theta_1-1)^2+\cdots+\theta_p^2(\theta_p-1)^2+ (1+\sum \limits\nolimits_{i=1}^p \theta_ia_i)^2$.
						As a result, $\det A(\theta)=0$ if and only if $\theta_i\in \{0,1\}$ and $\sum \limits\nolimits_{i=1}^p \theta_ia_i=-1$, i.e., the subset sum problem on $S$ with target $T=-1$ is feasible. The construction of the structure of $A(\theta)$ is inspired by \cite{valiant1979completeness}. The details are given in the proof for the following result.} 

						\begin{theorem}\label{complexity-stable} Given $A(\theta)$ in (\ref{affine}), it is NP-hard to determine whether there exists a $\theta\in {\mathbb R}^p$ such that $A(\theta)$ has a prescribed eigenvalue in the imaginary axis.   
						\end{theorem}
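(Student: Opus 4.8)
The plan is to reduce the NP-complete subset sum problem to the problem of deciding whether $A(\theta)$ can be made to have a prescribed eigenvalue on the imaginary axis, following the blueprint announced just before the statement. I would prescribe the eigenvalue to be $0$, which lies on the imaginary axis since $0={\bf j}\cdot 0$; then the target property ``$0\in {\bf \Lambda}(A(\theta))$'' is equivalent to the single scalar condition $\det A(\theta)=0$. The whole reduction therefore hinges on constructing, in polynomial time, a real matrix $A(\theta)$ whose entries are affine in $\theta$ (as in (\ref{affine})) and whose determinant equals the sum-of-squares polynomial
$$f(\theta)=\sum\nolimits_{i=1}^p\theta_i^2(\theta_i-1)^2+\Big(1+\sum\nolimits_{i=1}^p\theta_ia_i\Big)^2.$$

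Granting such a construction, the correctness of the reduction is immediate and purely real-analytic. Since $f$ is a sum of squares of real quantities, $f(\theta)=0$ for some $\theta\in {\mathbb R}^p$ if and only if every summand vanishes, i.e., $\theta_i\in\{0,1\}$ for all $i$ and $\sum\nolimits_{i=1}^p\theta_ia_i=-1$. The latter says exactly that the subset $S'=\{a_i:\theta_i=1\}$ sums to $-1$, which is automatically nonempty because the target $-1\ne 0$ rules out $\theta=0$. Hence there exists $\theta$ with $0\in {\bf \Lambda}(A(\theta))$ if and only if the subset sum instance with target $-1$ is feasible, exactly mirroring the proof of Theorem \ref{complexity}.

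The core difficulty, and the step I expect to be the main obstacle, is the determinantal realization of $f$ by an affinely parameterized matrix of polynomial size. The elementary ingredients are gadgets for the individual terms: each quartic term factors as the square of a product of two affine forms, with $\theta_i^2(\theta_i-1)^2={\rm det}\,{\rm diag}(\theta_i,\theta_i-1,\theta_i,\theta_i-1)$ and $\big(1+\sum\nolimits_{i=1}^p\theta_ia_i\big)^2={\rm det}\,{\rm diag}(g,g)$ with $g=1+\sum\nolimits_{i=1}^p\theta_ia_i$, and products of determinants are obtained by block-diagonal stacking. What these gadgets do not yield for free is the outer summation: $f$ is a \emph{sum} of $p+1$ determinants of affine matrices, and writing a sum of determinants as a single determinant of an affine matrix is not routine. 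This is exactly where I would invoke Valiant's universality of the determinant \cite{valiant1979completeness}: $f$ admits an arithmetic formula of size polynomial in $p$, and every such formula is a projection of the determinant, so there is a matrix of size $O({\rm poly}(p))$ with entries that are variables or constants, hence affine and compatible with (\ref{affine}), whose determinant is $f$. Concretely, I would set up the associated weighted directed graph whose $s$--$t$ path weights enumerate the monomials of $f$, let its branching structure carry out the summation, and read off $A(\theta)$ from the corresponding adjacency matrix.

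Finally I would verify the bookkeeping. The matrix $A(\theta)$ is real, has affine entries, and is built in time polynomial in $p$; the prescribed point $0$ is on the imaginary axis; and for fixed $\theta$ the condition $0\in {\bf \Lambda}(A(\theta))$ reduces to $\det A(\theta)=0$. Since subset sum with target $-1$ is NP-complete, the prescribed-imaginary-axis-eigenvalue problem for affinely parameterized matrices is NP-hard, which is the claim of Theorem \ref{complexity-stable}. I would also note that $0$ may be replaced by any prescribed purely imaginary value, but $0$ already suffices for the hardness claim and keeps $A(\theta)$ real.
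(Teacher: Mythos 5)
Your proposal is correct and follows essentially the same route as the paper: a reduction from subset sum with target $-1$ via an affinely parameterized matrix whose determinant is the sum-of-squares polynomial $\sum\nolimits_{i=1}^p\theta_i^2(\theta_i-1)^2+(1+\sum\nolimits_{i=1}^p\theta_ia_i)^2$, with the determinantal realization obtained through a Valiant-style weighted digraph. The only difference is one of explicitness — the paper exhibits a concrete $(7p+9)$-node graph and computes the determinant by enumerating its cycle covers, whereas you appeal to Valiant's universality of the determinant to guarantee a polynomial-size affine realization — but the underlying idea and the correctness argument are identical.
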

						\begin{proof} We again give a reduction from the subset sum problem. Let $S=\{a_1,...,a_p\}$ be a set of $p$ integers and the target sum $T=-1$. Let $\theta=[\theta_1,...,\theta_p]^\intercal$. Inspired by \cite{valiant1979completeness}, which establishes the correspondence between polynomials and the matrix determinants, construct $A(\theta)\in {\mathbb R}^{(7p+9)\times (7p+9)}$ as the weighted adjacency matrix of the directed graph $G(\theta)$ illustrated in Fig. \ref{graph-representation}, that is, $[A(\theta)]_{ij}$ is the weight of the edge from node $i$ to node $j$ if it exists, and $[A(\theta)]_{ij}=0$ otherwise. Remember that all nodes except the one labeled `S' have self-loops with weight $1$.
							\begin{figure}[h]
								\centering
								\includegraphics[width=3.5in]{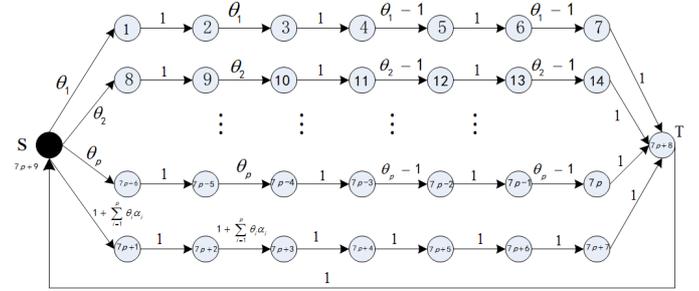}\\
								\caption{Graph representation $G(\theta)$ of $A(\theta)$. The number within each cycle is the node index, and on the edge is the weight of this edge. {\emph{Except for node $7p+9$ (labeled `S'), all the nodes have self-loops with weight $1$.}}}.\label{auc1}
								\label{graph-representation}
							\end{figure}
							
							Let $C$ be the set of cycle covers of $G(\theta)$, where a cycle cover is the set of edge-disjoint cycles that span the whole graph (i.e., visit each vertex of $G(\theta)$ exactly once). By definition of determinant \cite{R.A.1991Topics}, we have
							$$\det A(\theta)=\sum \limits_{c_i\in C} {\rm sign}(c_i) \prod \limits_{e \in E(c_i)} w(e), $$
							where $w(e)$ is the weight of edge $e$, $E(c_i)$ is the edge set of $c_i$, and ${\rm sign}(c_i)\in \{-1, 1\}$ is defined as \cite{R.A.1991Topics}
							$${\rm sign}(c_i)=\prod \limits_{\pi {\rm \ is \ a \ cycle \ of} \ c_i}(-1)^{{\rm length}(\pi)+1},$$
							in which ${\rm length}(\pi)$ is the number of edges in $\pi$. Observe that any cycle cover of $G(\theta)$ must contain a path from $S$ to $T$ and a path from $T$ to $S$, as well as $7p$ self-loops. Therefore, it is easy to get
							$$\det A(\theta)=\theta_1^2(\theta_1-1)^2+\cdots+\theta_p^2(\theta_p-1)^2+ (1+\sum \limits_{i=1}^p \theta_ia_i)^2.$$Obviously, $\det A({\bf 0}_p)=1\ne 0$.
							It is claimed that there exists a $\theta\in {\mathbb R}^p$ such that $A(\theta)$ has an eigenvalue ${\bf j}0$, if and only if the subset sum problem on $S$ is feasible.
							
							The only if direction is obvious, since to make $\det A(\theta)=0$ for a $\theta\in {\mathbb R}^p$, $\theta_i(\theta_i-1)=0, \forall i=1,...,p$ and $\sum \nolimits_{i=1}^p \theta_ia_i=-1$ should be satisfied. On the other hand, supposing $\theta\in \{0,1\}^p$ corresponds to a solution for the subset sum problem, we have $\sum \nolimits_{i=1}^p \theta_ia_i=-1$, which immediately leads to $\det A(\theta)=0$.
							
							Since $A(\theta)$ can be constructed in polynomial time, we conclude that deciding whether there is a $\theta\in {\mathbb R}^n$ such that $A(\theta)$ has a zero eigenvalue is NP-hard (and NP-complete). This leads to the required statement in Theorem \ref{complexity-stable}.
						\end{proof}

						\begin{remark} The problem in Theorem \ref{complexity-stable} is relevant in extending some existing methods for the RSSR problem to the general affine case in the literature.\footnote{It is worth mentioning that, in our proof for Theorem \ref{complexity-stable}, it is unclear whether $A(\theta)$ over $\theta \in {\mathbb R}^p$ contains a stable matrix.}
							For example, \cite{qiu1993formula,qiu1995computation,sreedhar1996fast,freitag2014new} proposed methods to find the global minimizer for $r^S_{stb}$, in which the perturbation is of the form $A_\Delta=F\Delta H$ for known $F, H$ and unknown full $\Delta$, and ${\bf \Gamma}(\theta)=\Delta$. In their methods, they first determine a frequency range for $w$, which is an interval of the form $[-w_{\min},w_{\max}]$ that contains the global minimizer, and then at each sample $w_s$ in this range (obtained via the grid or golden section search \cite{qiu1993formula,qiu1995computation,sreedhar1996fast}), determine the smallest $||\Delta||_2$ (denoted by $r(w_s)$) such that $A+A_\Delta-{\bf j}w_sI$ is singular, and finally find the smallest value of $r(w_s)$ among those samples, which is the global minimum of $r^S_{stb}$. The range $[-w_{\min},w_{\max}]$ is the whole $(-\infty,\infty)$ in \cite{qiu1993formula}, and analytically or iteratively determined in \cite{qiu1995computation, sreedhar1996fast,freitag2014new}, and determining $r({w_s})$ for a given $w_s$ is via a unimodal $\mu$-value related formula established in \cite{qiu1993formula}. Moreover, in the work of Katewa et al. \cite{katewa2020real}, the singularity of $A+A_\Delta$ (i.e., ${\bf j}w=0$)  is dealt with as a separate case from ${\bf j}w\ne 0$.  
						\end{remark}

						\begin{remark}
							Poljak et al. \cite{poljak1993checking} showed it is NP-hard to check the non-singularity of all representatives of an affinely parameterized matrix when the parameters are within the interval $[0,1]$ (termed robust non-singularity therein). Theorem \ref{complexity-stable} is an extension of that result by removing the interval constraints.
						\end{remark}
						
						\begin{remark}Unlike Corollary \ref{complexity-stab-complex}, Theorem \ref{complexity-stable} may not be valid if $\theta\in {\mathbb C}^p$.
							Note the NP-hardness of the problem in Theorem \ref{complexity-stable} does not imply that checking the {\emph{feasibility}} of the RSSR problem is NP-hard. Neither problem can be straightforwardly reduced to the other. 
						\end{remark}

						\section{Unified rank-relaxation based methods} \label{algorithms}
						Since we have established that determining the RSCR and RSSZR are both NP-hard, it is impossible to find the global optima in polynomial time unless $P=NP$.  In this section, towards the RSCR, RSSZR, and RSSR problems, we present unified rank-relaxation based algorithms with convergence guarantees. Here, `unified' refers to the two notable features of the proposed algorithms:  first, they share the same framework for the three different problems; second, for each problem, the respective algorithm is suitable for both the F-norm and $2$-norm based problems.
						
						We first present a useful lemma for the rank deficiency of complex-valued matrices.
						
						\begin{lemma}\label{real-rank}
							Given $X,Y\in {\mathbb R}^{n_1\times n_2}$, $X+{\bf j}Y$ is not of full row rank, if and only if $W\doteq \left[
							\begin{array}{cc}
								X & Y \\
								-Y & X \\
							\end{array}
							\right]
							$ is not of full row rank. {Moreover, $\sigma_{\min}(X+{\bf j}Y)=\sigma_{\min}(W)$.}
						\end{lemma}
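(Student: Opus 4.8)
The plan is to exploit the standard real representation of complex matrices, and to treat the two assertions (rank deficiency and equality of smallest singular values) by separate, self-contained arguments, since each admits a clean direct proof.

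For the rank equivalence I would set up an explicit correspondence between left null vectors. Write a candidate left null vector of $Z\doteq X+{\bf j}Y$ as $v=p+{\bf j}q$ with $p,q\in\mathbb{R}^{n_1}$, and expand $v^\top Z=(p^\top X-q^\top Y)+{\bf j}(p^\top Y+q^\top X)$. Setting $w=[\,p^\top\ q^\top\,]^\top\in\mathbb{R}^{2n_1}$ and computing $w^\top W=[\,p^\top X-q^\top Y\ \ \ p^\top Y+q^\top X\,]$, one sees that the two blocks of $w^\top W$ are exactly the real and imaginary parts of $v^\top Z$, so $v^\top Z=0\iff w^\top W=0$. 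Because $v\mapsto w$ is a real-linear bijection carrying nonzero vectors to nonzero vectors, $Z$ admits a nonzero left null vector iff $W$ does; equivalently, $Z$ is not of full row rank iff $W$ is not. This half needs no hypothesis on the relative sizes of $n_1$ and $n_2$.

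For the singular-value identity I would block-diagonalize $W$ by a unitary congruence. Introduce $T=\tfrac{1}{\sqrt2}\begin{bmatrix} I & I \\ {\bf j}I & -{\bf j}I\end{bmatrix}$, verify $T^*T=I$, and check the key computation $T^* W T=\mathrm{diag}(Z,\overline Z)$ with $\overline Z=X-{\bf j}Y$ (the off-diagonal blocks must cancel). Since $T$ is unitary, $W$ and $\mathrm{diag}(Z,\overline Z)$ share their singular values, so the singular values of $W$ are the union, with multiplicity, of those of $Z$ and of $\overline Z$. Conjugating an SVD $Z=U\Sigma V^*$ gives $\overline Z=\overline U\,\Sigma\,\overline V^{*}$ with $\overline U,\overline V$ again unitary, hence $Z$ and $\overline Z$ have identical singular values. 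Therefore each singular value of $Z$ appears exactly twice in $W$, and in particular $\sigma_{\min}(W)=\sigma_{\min}(X+{\bf j}Y)$.

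The main obstacle is bookkeeping rather than conceptual: matching the signs in the block definition of $W$ with those in $T$ so that $T^*WT$ comes out block-diagonal, and confirming $\sigma_i(\overline Z)=\sigma_i(Z)$. As a consistency check, the rank claim is in fact subsumed by the singular-value claim whenever $n_1\le n_2$, since then full row rank is equivalent to $\sigma_{\min}>0$ for both $Z$ and the wider matrix $W$; this is precisely the regime of the PBH-type matrices $[A(\theta)-zI,\,B(\theta)]$ arising later. I would nevertheless retain the direct null-vector argument, as it is valid in all dimensions.
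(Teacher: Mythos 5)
Your proof is correct. The first half (the left-null-vector correspondence $v=p+{\bf j}q \leftrightarrow w=[\,p^\intercal\ q^\intercal\,]^\intercal$) is essentially identical to the paper's argument. For the singular-value identity, however, you take a genuinely different route. The paper stays within the machinery of its first statement: it writes $\sigma_{\min}(X+{\bf j}Y)=\lambda_{\min}^{1/2}\bigl((X+{\bf j}Y)(X^\intercal-{\bf j}Y^\intercal)\bigr)$, observes that the real representation of $(X+{\bf j}Y)(X^\intercal-{\bf j}Y^\intercal)-\lambda I$ is exactly $WW^\intercal-\lambda I$, and applies the rank-deficiency equivalence to conclude that $(X+{\bf j}Y)(X+{\bf j}Y)^{*}$ and $WW^\intercal$ have the same eigenvalues, hence the same minimum. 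Your unitary equivalence $T_1^{*}WT_2=\mathrm{diag}(Z,\overline Z)$ (your computation of the cancelling off-diagonal blocks checks out, as does $\sigma_i(\overline Z)=\sigma_i(Z)$ via conjugating the SVD) is a stronger statement: it identifies the entire singular spectrum of $W$ as that of $Z$ with every value doubled, whereas the paper's eigenvalue argument only matches spectra as sets, which is all that is needed for $\sigma_{\min}$. The only bookkeeping caveat is that since $W$ is $2n_1\times 2n_2$, the left and right transformations must be unitary matrices of different sizes (so it is a unitary equivalence $T_1^{*}WT_2$ rather than a single congruence $T^{*}WT$); this does not affect the conclusion. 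Your closing observation that the rank claim follows from the singular-value claim when $n_1\le n_2$ is also sound, and your choice to keep the direct null-vector argument for the general case matches the paper's level of generality.
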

					
				       \begin{proof}
				       	Suppose there is a nonzero row vector $x+{\bf j}y$ with $x,y\in {\mathbb R}^{1\times n_1}$ such that $(x+{\bf j}y)(X+{\bf j}Y)=0$. This yields
				       	$(xX-yY)+{\bf j}(yX+xY)=0$, which leads to $xX-yY=0$ and $yX+xY=0$. That is, $[x,y]W=0$, implying $W$ is not of full row rank. It is easy to see that the reverse direction also holds.
				       	
				       	Next, using the definition of singular value, we can write $\sigma_{\min}(X+{\bf j}Y)=\lambda^{\frac{1}{2}}_{\min}((X+{\bf j}Y)(X^\intercal-{\bf j}Y^\intercal))$, where
				       	$\lambda_{\min}(\cdot)$ takes the minimum eigenvalue, and $X^\intercal-{\bf j}Y^\intercal$ is the conjugate transpose of $X+{\bf j}Y$. Since $(X+{\bf j}Y)(X^\intercal-{\bf j}Y^\intercal)$ is Hermitian, it has real eigenvalues only. Let $\lambda\in {\mathbb R}$ be an eigenvalue of $(X+{\bf j}Y)(X^\intercal-{\bf j}Y^\intercal)$. Then, $(X+{\bf j}Y)(X^\intercal-{\bf j}Y^\intercal)-\lambda I=(XX^\intercal+YY^\intercal -\lambda I)+ {\bf j}(YX^\intercal-XY^\intercal)$ is of row rank deficient. By the first statement of Lemma \ref{real-rank}, we have that
				       	$$\left[
				       	\begin{array}{cc}
				       	XX^\intercal+YY^\intercal -\lambda I & YX^\intercal-XY^\intercal \\
				       	-YX^\intercal+XY^\intercal & XX^\intercal+YY^\intercal -\lambda I \\
				       	\end{array}
				       	\right]=WW^\intercal-\lambda I$$
				       	is of row rank deficient. Therefore, $\lambda$ is also an eigenvalue of $WW^\intercal$. The reverse direction also holds, that is, any $\lambda\in {\mathbb R}$ that is an eigenvalue of $WW^\intercal$ must also be an eigenvalue of $(X+{\bf j}Y)(X^\intercal-{\bf j}Y^\intercal)$.  Hence, $\sigma_{\min}(X+{\bf j}Y)=\lambda^{\frac{1}{2}}_{\min}((X+{\bf j}Y)(X^\intercal-{\bf j}Y^\intercal))=\lambda^{\frac{1}{2}}_{\min}(WW^\intercal)=\sigma_{\min}(W)$.
				       \end{proof}

						\begin{definition}[\cite{qi1996extreme}]\label{Ky-Fan-norm}
							Given a matrix $M\in {\mathbb F}^{n_1\times n_2}$ with ordered singular values $\sigma_1(M)\ge \cdots \ge \sigma_{\min\{n_1,n_2\}}(M)$, where ${\mathbb F}={\mathbb C}$ or $\mathbb R$, the nuclear norm of $M$ is the sum of all its singular values, i.e., $||M||_*=\sum \nolimits_{i=1}^{\min\{n_1,n_2\}} \sigma_i(M)$, and the Ky Fan $r$-norm of $M$ is the sum of the largest $r$ singular values, i.e., $||M||_{F_r}= \sum \nolimits_{i=1}^{r} \sigma_i(M)$.
						\end{definition}

						It is known that both $||M||_*$ and $||M||_{F_r}$ are convex in ${\mathbb F}^{n_1\times n_2}$ \cite{boyd2004convex}. In the following, define the function $$g(\theta)=||{\bf \Gamma}(\theta)||^2,$$where $||\cdot||$ is either the F-norm or the $2$-norm. Since ${\bf \Gamma}(\theta)$ is affine in $\theta$, and $||M||_F^2$ as well as $||M||_2^2$ is convex for $M$  in ${\mathbb F}^{n_1\times n_2}$, we conclude that $g(\theta)$ is convex for $\theta$ in ${\mathbb F}^{p}$ (${\mathbb F}={\mathbb C}$ or $\mathbb R$).\footnote{The convexity of $f(M)\doteq ||M||_2^2$ comes from the fact that, $f(M)$ is a composition function of $f=t^2, t\in [0,+\infty)$ and $t=||M||_2$, $M\in {\mathbb F}^{n_1\times n_2}$ which satisfies: (1) $t^2$ and $||M||_2$ are both convex in the corresponding domains; (2) $t^2$ is non-decreasing in $[0,+\infty)$; (3) The domain of $f(M)$ is convex (see \cite[Chap. 3.2.4]{boyd2004convex}). For ease of programming implementation, we may also use $g(\theta)=||{\bf \Gamma}(\theta)||_2$ for the $2$-norm instead of $g(\theta)=||{\bf \Gamma}(\theta)||^2_2$.}

						\subsection{Rank-relaxation based method for RSCR}
						According to the PBH test, the RSCR is equivalent to the following problem
						\begin{equation} \label{rank-PBH}
							\begin{aligned}
								& \mathop {\min } \limits_{\theta\in {\mathbb R}^p, \mu, \lambda \in {\mathbb R}} \ g(\theta)\\
								& {\rm s.t.} \ \ \   {\rm rank}[A(\theta)-({\bf j}\lambda + \mu)I, B(\theta)]<n
							\end{aligned}
						\end{equation}
						By Lemma \ref{real-rank}, we can remove the imaginary unit $\bf j$ such that problem (\ref{rank-PBH}) can be formulated as
						\begin{align} \label{rank-constraint-real}
							& \mathop {\min } \limits_{\theta\in {\mathbb R}^p, \mu, \lambda \in {\mathbb R}} \ g(\theta)\\
							&  {\rm s.t.} \ \ \     {\rm rank}\left[
							\begin{array}{cccc}
								A(\theta)-\mu I & B(\theta) & -\lambda I & 0 \\
								\lambda I & 0 & A(\theta)-\mu I & B(\theta) \\
							\end{array}
							\right]<2n, \label{rank-constraint}
						\end{align}
						which is a minimization problem over the rank constraint in the real field.
						
						The rank constraint is non-convex, making problem (\ref{rank-constraint-real}) sill hard to optimize. To handle this, inspired by \cite{hu2012fast}, we adopt the {\emph{truncated nuclear norm}} to replace the rank constraint. By Definition \ref{Ky-Fan-norm}, the constraint (\ref{rank-constraint}) is equivalent to
						\begin{equation}\label{trancated-norm}\sigma_{2n}(Z)=||Z||_*-||Z||_{F_{2n-1}}=0,\end{equation}
						where $Z\doteq \left[
						\begin{array}{cccc}
							A(\theta)-\mu I & B(\theta) & -\lambda I & 0 \\
							\lambda I & 0 & A(\theta)-\mu I & B(\theta) \\
						\end{array}
						\right]$. The sandwiched term of (\ref{trancated-norm}) is called the truncated nuclear norm (TNNR) in \cite{hu2012fast}. We, therefore, have a rank-relaxation formulation of problem (\ref{rank-PBH}):
						\begin{align} \label{rank-relaxation}
							& \mathop {\min } \limits_{\theta\in {\mathbb R}^p, \mu, \lambda \in {\mathbb R}, Z\in {\mathbb R}^{2n\times (2n+2m)}}  \ g(\theta)+\gamma(||Z||_*-||Z||_{F_{2n-1}})\\
							& {\rm s.t.} \ \ \   Z= \left[
							\begin{array}{cccc}
								A(\theta)-\mu I & B(\theta) & -\lambda I & 0 \\
								\lambda I & 0 & A(\theta)-\mu I & B(\theta) \\
							\end{array}
							\right]   \label{Z-equality}
						\end{align}
						where $\gamma>0$ is the regularization parameter, penalizing the smallest singular value of $Z$, which can be very large.  It is worth mentioning that, problem (\ref{rank-relaxation}) is just a relaxation of the RSCR. The optimal solution to this problem does not necessarily correspond to an optimal to the RSCR, and vice versa. {\emph{ It is marked that the introduction of the new variable $Z$ is to simplify the computation of (sub)differentials.}}

						As $(\theta, \mu,\lambda,Z)$ subject to (\ref{Z-equality}) is uniquely determined by $Z$,  we denote the  objective of problem (\ref{rank-relaxation}) at $(\theta, \mu,\lambda,Z)$ as $F(Z)$, i.e., $F(Z)=g(\theta)+\gamma(||Z||_*-||Z||_{F_{2n-1}})$. Note first the variable $Z$ depends on $\theta$, $\mu$ and $\lambda$ affinely in (\ref{Z-equality}), making the constraint set on $(\theta, \mu,\lambda,Z)$ convex. Additionally, the two terms $g(\theta)+\gamma||Z||_*$ and $\gamma||Z||_{F_{2n-1}}$ in the objective are both convex in $Z$ and $\theta$.  Therefore, problem (\ref{rank-relaxation}) is a difference-of-convex (DC) program \cite{yuille2003concave}. For DC programs, we shall adopt the sequential convex relaxations \cite{yuille2003concave} to find a stationary point of problem (\ref{rank-relaxation}). Recall a feasible solution is said to be a stationary point of an optimization problem if it satisfies the corresponding Karush-Kuhn-Tucker (KKT) conditions, which are necessary for the local optimality (see \citep[Sec 2]{razaviyayn2013unified}).
						
						To better describe the sequential convex relaxation procedure, let  $(\theta^{(k)}, \mu^{(k)},\lambda^{(k)},Z^{(k)})$ be the point obtained in the $k$th iteration. To develop this procedure, the key is to determine the first-order derivative of the second convex term (the concave term as an addend) in the objective, such that we can linearize it around $(\theta^{(k)}, \mu^{(k)},\lambda^{(k)},Z^{(k)})$. To this end, let the singular value decomposition (SVD) of $Z^{(k)}$ be
						\begin{equation}\label{svd} Z^{(k)}=[U_1^{(k)}, U_2^{(k)}]\left[
							\begin{array}{cc}
								\Lambda_1 &  \\
								& \Lambda_2 \\
							\end{array}
							\right]\left[
							\begin{array}{c}
								V^{(k),\intercal}_1 \\
								V^{(k),\intercal}_2 \\
							\end{array}
							\right]
							,\end{equation} where $U_1^{(k)}$ and $V_1^{(k)}$ are respectively the left and right singular vectors associated with the largest $2n-1$ singular values of $Z^{(k)}$. From \citep[Theorem 3.4]{qi1996extreme}, $U_1^{(k)}V_1^{(k),\intercal}\in  \partial ||Z||_{F_{2n-1}}$ ($\partial$ denotes the subdifferential). Then, $||Z||_{F_{2n-1}}$ can be linearized at $Z^{(k)}$ as \cite{watson1992characterization,qi1996extreme}
						\begin{equation}\label{Fr-norm}||Z||_{F_{2n-1}}\thickapprox ||Z^{(k)}||_{F_{2n-1}}+{\rm tr}(U_1^{(k),\intercal}(Z-Z^{(k)})V_1^{(k)}),\end{equation}
						and from the convexity of $||Z||_{F_{2n-1}}$ we establish
						\begin{equation} \label{convexity}
							\begin{array}{l}||Z||_{F_{2n-1}}\ge ||Z^{(k)}||_{F_{2n-1}}+{\rm tr}(U_1^{(k),\intercal}(Z-Z^{(k)})V_1^{(k)}),\\ \forall Z, Z^{(k)}\in {\mathbb R}^{2n\times (2n+2m)},\end{array}
						\end{equation}
						where ${\rm tr}(\cdot)$ takes the trace.

						Based on the above expressions, the program at the $(k+1)$th iteration is formulated as
						\begin{equation} \label{sub-convex-iterate}
							\begin{aligned}
								\mathop {\min }\limits_{\theta, \mu, \lambda, Z} &{\kern 3pt} g(\theta) + \gamma||Z||_{*}-\gamma {\rm tr}(U_1^{(k),\intercal}ZV_1^{(k)})\\
								s.t. \ \ \  & Z= \left[
								\begin{array}{cccc}
									A(\theta)-\mu I & B(\theta) & -\lambda I & 0 \\
									\lambda I & 0 & A(\theta)-\mu I & B(\theta) \\
								\end{array}
								\right]
							\end{aligned}
						\end{equation}
						Problem (\ref{sub-convex-iterate}) is convex since its objective function as well as constraint set is convex, and thus can be solved efficiently using the off-the-shelf solvers for convex optimizations. The complete iterative procedure is collected in Algorithm \ref{alg2}, recalling that $F(Z^{(k)})=g(\theta^{(k) })+\gamma(||Z^{(k)}||_*-||Z^{(k)}||_{F_{2n-1}})$. The convergence of Algorithm \ref{alg2} is stated in the following theorem.

						\begin{algorithm}[H] 
							{{{
										\caption{: A rank-relaxation algorithm for RSCR}
										\label{alg2} 
										\begin{algorithmic}[1]
											\STATE {Set $k=0$ and randomly initialize  $\theta^{(0)}, \lambda^{(0)}, \mu^{(0)}$;  }
											\WHILE {$|F(Z^{(k)})-F(Z^{(k-1)})|>\xi$ ($\xi>0$ is the convergence threshold; assume the procedure is executed when $k=0$)} 
											\STATE Obtain $U_1^{(k)}$ and $V_1^{(k)}$ according to (\ref{svd}) via the SVD on $Z^{(k)}$;
											\STATE Solve the convex program (\ref{sub-convex-iterate}) to obtain $(\theta^{(k+1)},\lambda^{(k+1)}, \mu^{(k+1)}, Z^{(k+1)})$;
											\STATE $k+1 \leftarrow k$;
											\ENDWHILE
											\STATE Return $\theta^{(k)}$ when convergence.
								\end{algorithmic}}}
							}
						\end{algorithm}
					
											\begin{theorem}\label{converge} The sequence $\{(\theta^{(k)}, \mu^{(k)},\lambda^{(k)},Z^{(k)})\}$ generated by Algorithm \ref{alg2} satisfies:
						\begin{itemize}
							\item[(i)] $F(Z^{(k+1)})\le F(Z^{(k)})$, $k=0,1,\cdots,$\\~ $\lim \limits_{k\rightarrow \infty} (F(Z^{(k+1)})-F(Z^{(k)}))=0$;
							\item[(ii)] $\lim \limits_{k\rightarrow \infty} (\theta^{(k+1)}-\theta^{(k)})=0$.
						\end{itemize}
						Moreover, Algorithm \ref{alg2} is guaranteed to converge to a stationary point of problem (\ref{rank-relaxation}).
					\end{theorem}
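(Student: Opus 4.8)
The plan is to recognize Algorithm \ref{alg2} as a majorization--minimization (equivalently, difference-of-convex / convex--concave) scheme for $F(Z)=g(\theta)+\gamma\|Z\|_*-\gamma\|Z\|_{F_{2n-1}}$ and to run the standard DCA convergence argument. Writing the two convex pieces as $h_1(\theta,Z)=g(\theta)+\gamma\|Z\|_*$ and $h_2(Z)=\gamma\|Z\|_{F_{2n-1}}$, I would first set up the surrogate
$$\widehat F_k(Z)=h_1(\theta,Z)-\gamma\,\|Z^{(k)}\|_{F_{2n-1}}-\gamma\,{\rm tr}\big(U_1^{(k),\intercal}(Z-Z^{(k)})V_1^{(k)}\big),$$
which is exactly the objective of (\ref{sub-convex-iterate}) up to an additive constant independent of the variables. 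The structural fact I would establish up front is the majorization property: since $U_1^{(k)}V_1^{(k),\intercal}\in\partial\|Z^{(k)}\|_{F_{2n-1}}$, inequality (\ref{convexity}) gives $\widehat F_k(Z)\ge F(Z)$ for every feasible $(\theta,\mu,\lambda,Z)$, with equality at $Z=Z^{(k)}$. Thus $\widehat F_k$ touches $F$ from above at the current iterate.

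Part (i) then follows from the MM sandwich. Because $Z^{(k+1)}$ minimizes $\widehat F_k$ over the convex feasible set (\ref{Z-equality}), I would chain $F(Z^{(k+1)})\le \widehat F_k(Z^{(k+1)})\le \widehat F_k(Z^{(k)})=F(Z^{(k)})$, which is the claimed monotone decrease. For the limit I would note $F$ is bounded below: by the identity behind (\ref{trancated-norm}), $F(Z)=g(\theta)+\gamma\,\sigma_{2n}(Z)\ge 0$, since $g\ge0$, $\sigma_{2n}\ge0$, and $\gamma>0$. A monotone, bounded-below sequence converges, so $F(Z^{(k)})\to F^\star$ and hence $F(Z^{(k+1)})-F(Z^{(k)})\to0$.

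For part (ii) I would extract a \emph{sufficient-decrease} bound. The standing assumption ${\bf\Gamma}(\theta)\ne0$ unless $\theta=0$ makes the linear map $\theta\mapsto{\bf\Gamma}(\theta)$ injective, so for the F-norm $g(\theta)=\|{\bf\Gamma}(\theta)\|_F^2$ is a positive-definite quadratic, i.e. $\alpha$-strongly convex in $\theta$ for some $\alpha>0$; consequently $\widehat F_k$ is $\alpha$-strongly convex along the $\theta$-direction (the 2-norm case is more delicate, being where a strongly convex regularizer is needed). Applying the strong-convexity lower bound at the minimizer $Z^{(k+1)}$ yields $\widehat F_k(Z^{(k)})-\widehat F_k(Z^{(k+1)})\ge\tfrac{\alpha}{2}\|\theta^{(k+1)}-\theta^{(k)}\|^2$, which combined with the sandwich of part (i) gives $F(Z^{(k)})-F(Z^{(k+1)})\ge\tfrac{\alpha}{2}\|\theta^{(k+1)}-\theta^{(k)}\|^2$. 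Telescoping and using $\sum_k\big(F(Z^{(k)})-F(Z^{(k+1)})\big)=F(Z^{(0)})-F^\star<\infty$ forces $\sum_k\|\theta^{(k+1)}-\theta^{(k)}\|^2<\infty$, hence $\theta^{(k+1)}-\theta^{(k)}\to0$. Note that only $\theta$ carries strong convexity (the nuclear norm and $\mu,\lambda$ do not), which is precisely why the statement isolates $\theta$.

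Finally, for convergence to a stationary point of (\ref{rank-relaxation}) I would pass the subproblem's optimality conditions to the limit. Along a bounded subsequence $(\theta^{(k)},\mu^{(k)},\lambda^{(k)},Z^{(k)})\to(\bar\theta,\bar\mu,\bar\lambda,\bar Z)$, the matrices $U_1^{(k)}V_1^{(k),\intercal}$ and the nuclear-norm subgradients at $Z^{(k+1)}$ are uniformly bounded (a partial isometry has Frobenius norm $\sqrt{2n-1}$, and a nuclear-norm subgradient has spectral norm $\le1$), so a further subsequence makes them converge; by the outer semicontinuity (closed graph) of the subdifferentials of $\|\cdot\|_{F_{2n-1}}$ and $\|\cdot\|_*$, these limits are genuine subgradients at $\bar Z$. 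The stationarity (KKT) condition of the convex program (\ref{sub-convex-iterate}), read through the affine map (\ref{Z-equality}), then passes to the limit and reproduces the KKT condition of the DC problem (\ref{rank-relaxation}) at $(\bar\theta,\bar\mu,\bar\lambda,\bar Z)$. The main obstacle I anticipate is exactly here: to identify the linearization point $Z^{(k)}$ with the solution $Z^{(k+1)}$ in the limit I need $\|Z^{(k+1)}-Z^{(k)}\|\to0$, whereas part (ii) only delivers the $\theta$-gap; closing this requires controlling the $\mu,\lambda$ displacements and the a priori unbounded iterates $\mu^{(k)},\lambda^{(k)}$. I would address this through the block structure of $Z$ in (\ref{Z-equality}) together with a coercivity/boundedness argument (or, failing that, a vanishing proximal term in $(\mu,\lambda)$ added to (\ref{sub-convex-iterate})), which is the delicate technical step of the whole proof.
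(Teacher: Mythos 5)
Your part (i) is exactly the paper's argument: the same majorization function (the paper calls it $G(Z,Z^{(k)})$, your $\widehat F_k$), the same sandwich $F(Z^{(k+1)})\le G(Z^{(k+1)},Z^{(k)})\le G(Z^{(k)},Z^{(k)})=F(Z^{(k)})$ built from (\ref{convexity}), and the same lower bound $F\ge 0$ to get convergence of $\{F(Z^{(k)})\}$. Where you diverge is in parts (ii) and the stationarity claim: the paper does not carry out the DCA analysis by hand, but instead observes that Algorithm \ref{alg2} is an MM scheme and invokes the general convergence theorem of \cite{razaviyayn2013unified} (after a brief, rather terse appeal to compactness of the level sets of $F$ for boundedness of the iterates). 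Your route --- a sufficient-decrease bound $F(Z^{(k)})-F(Z^{(k+1)})\ge\frac{\alpha}{2}\|\theta^{(k+1)}-\theta^{(k)}\|^2$ from strong convexity of $g$ in $\theta$, followed by telescoping, and then passing subgradients to the limit via outer semicontinuity --- is more self-contained and more informative, and it correctly explains why the theorem isolates $\theta$ in part (ii). The price is that it surfaces two technical issues the paper's citation quietly absorbs: (a) for the $2$-norm, $g(\theta)=\|{\bf\Gamma}(\theta)\|_2^2$ need not be strongly convex (e.g.\ Form (ii) gives $\max_i\theta_i^2$), so your sufficient-decrease constant $\alpha$ can vanish and part (ii) is not established in that case by your argument; and (b) the stationarity step needs $Z^{(k+1)}-Z^{(k)}\to0$ and boundedness of $\mu^{(k)},\lambda^{(k)}$, which you flag but do not close (coercivity of $\sigma_{2n}(Z)$ in $(\mu,\lambda)$, which follows from $\sigma_{2n}(Z)=\sigma_{\min}([A(\theta)-(\mu+{\bf j}\lambda)I,\ B(\theta)])$ via Lemma \ref{real-rank}, is the natural way to get boundedness). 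Both issues are honestly acknowledged in your writeup, and the paper's own proof is no more rigorous on these points --- it simply asserts that the hypotheses of the cited MM theorem are ``easy to verify'' --- so I would call your proposal a correct alternative route whose remaining gaps are shared, in disguised form, by the original.
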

					
					\begin{proof} The non-increasing of $F(Z^{(k)})$ can be proved in a  way similar to \citep[Theorem 2]{yuille2003concave}. To be specific, let the function $G(Z,Z^{(k)})\doteq g(\theta)+\gamma||Z||_*-\gamma(||Z^{(k)}||_{F_{2n-1}}+{\rm tr}(U_1^{(k),\intercal}(Z-Z^{(k)})V_1^{(k)}))$. Then, from (\ref{convexity}), we have
						\begin{equation} \label{upper-bound}
						\begin{array}{l}	F(Z)\le G(Z,Z^{(k)}), F(Z^{(k)})=G(Z^{(k)},Z^{(k)}), \\ \\ \forall Z,\ Z^{(k)}\in {\mathbb{R}}^{2n\times (2n+2m)}. \end{array}
						\end{equation}
						We therefore have
						$$F(Z^{(k+1)})\le G(Z^{(k+1)},Z^{(k)})\le G(Z^{(k)},Z^{(k)})=F(Z^{(k)}),$$
						where the first inequality and the last one come from (\ref{upper-bound}) and the sandwiched one from the optimization on (\ref{sub-convex-iterate}).
						
						Since $F(Z^{(k)})\ge 0$ is lower bounded, we obtain that the sequence $\{F(Z^{(k)})\}$ has a limit. The existence of a limit for $\{\theta^{(k)}\}$ then follows from the continuity of $F(Z)$ (so that the level set $\{Z\ |F(Z)\le F(Z^{(k)})\}$ is compact for any $Z^{(k)}$).  For the rest of statement (ii), note Algorithm \ref{alg2} is in fact a majorization-minimization (MM) algorithm (see \citep[Section 2]{lanckriet2009convergence} for details). It is easy to verify that the approximation function $G(Z,Z^{(k)})$ of $F(Z)$
						satisfies the sufficient conditions in \citep[Theorem 1 \& Corollary 1]{razaviyayn2013unified} for the convergence of the MM algorithm. According to that, Algorithm \ref{alg2} is guaranteed to converge to a stationary point of  problem (\ref{rank-relaxation}).
					\end{proof}	
						
						{\small
							\begin{algorithm}[H] 
								{{{
											\caption{: A two-stage rank-relaxation algorithm for RSCR with $\epsilon$-tolerance}
											\label{alg2-weighting} 
											\begin{algorithmic}[1]
											
												\STATE {Set $k=0$ and  initialize $\theta^{(0)}, \lambda^{(0)}, \mu^{(0)}$ and $Z^{(0)}$ by solving the DC program using the sequential convex program (i.e., the similar algorithm to Algorithm \ref{alg2} by changing the objective of (\ref{sub-convex-iterate}) to $||Z||_{*}-{\rm tr}(U_1^{(k),\intercal}ZV_1^{(k)})$ with a random initial point $(\tilde \theta^{(0)}, \tilde \lambda^{(0)}, \tilde \mu^{(0)})$)
													\begin{equation}  \label{initial-prob-RSCR}
														\begin{aligned}
															& \mathop {\min }\limits_{\theta, \mu, \lambda, Z} ||Z||_{*}-||Z||_{F_{2n-1}} \\
															&	s.t. \ \ (\ref{Z-equality})
														\end{aligned}
													\end{equation}
												}
												//$\rightarrow$[First stage]
												\WHILE {$|F(Z^{(k)})-F(Z^{(k-1)})|>\xi$ ($\xi>0$ is the convergence threshold; assume the procedure is executed when $k=0$)} 
												\STATE Obtain $U_1^{(k)}$ and $V_1^{(k)}$ according to (\ref{svd}) via the SVD on $Z^{(k)}$;
												\STATE Set $\gamma= \frac{g(\theta^{(0)})}{\epsilon}$. Solve the convex program (\ref{sub-convex-iterate}) to obtain $(\theta^{(k+1)},\lambda^{(k+1)}, \mu^{(k+1)}, Z^{(k+1)})$;
												\STATE $k+1 \leftarrow k$;
												\ENDWHILE~\\
												 //$\rightarrow$[Second stage]
												\STATE Return $\theta^{(k)}$ when convergence.
									\end{algorithmic}}}
								}
							\end{algorithm}
						}

						
						It is noted that there may be multiple stationary points for the relaxed problem (\ref{rank-relaxation}). In this case, the local minima returned by Algorithm \ref{alg2} would depend on the initial points \cite{razaviyayn2013unified} (this property is also shared by other iterative algorithms for the RSCR \cite{johnson2018structured}).   Moreover, for a returned local optimal solution $(\theta^*,\lambda^*,\mu^*,Z^*)$ of problem (\ref{rank-relaxation}), there is usually no guarantee that it is a feasible solution for the original problem (\ref{rank-constraint-real}), since $\sigma_{2n}(Z^*)$ may not be zero.
						
					
					{
			To ensure that the obtained solutions meet certain quality specifications, we introduce an initialization stage to Algorithm \ref{alg2} that leads to a two-stage algorithm for RSCR with $\epsilon$-tolerance (see Algorithm \ref{alg2-weighting}). Specifically, for any $\epsilon>0$, we aim to obtain a solution $(\theta^*,\lambda^*,\mu^*,Z^*)$ that satisfies the $\epsilon$-tolerance condition $\sigma_{2n}(Z^*)\le \epsilon$. By Lemma \ref{real-rank}, this guarantees that
	  	  \begin{equation}\label{quality} \sigma_{\min}([A(\theta^*)-({\bf j}\lambda^*+\mu^*)I, B(\theta^*)])=\sigma_{\min}(Z^*)\le \epsilon.\end{equation}When $\epsilon$ is small enough, the obtained system $(A(\theta^*), B(\theta^*))$ could be regarded as uncontrollability.
			 In the first stage of the algorithm, our goal is to determine a feasible (or near-feasible, given that verifying the feasibility of the RSCR is NP-complete) solution to the RSCR problem. To this end, we use problem (\ref{initial-prob-RSCR}) as the initial condition. The reason for this choice is that, given that the original RSCR problem is feasible, the global optimum of problem (\ref{initial-prob-RSCR}) is exactly $0$. Moreover, any feasible solution to the RSCR problem is also a global optimal solution to problem (\ref{initial-prob-RSCR}). However, if a feasible solution $\theta^{(0)}$ to the original RSCR problem is already available, the first stage of Algorithm \ref{alg2-weighting} is not necessary. Note that this stage does not optimize the original objective function $g(\theta)$.
			
			In the second stage of the algorithm, we choose the regularization parameter $\gamma = \frac{g(\theta^{(0)}}{\epsilon}$ due to the non-decreasing nature of $F(Z^{(k)})$. We provide a detailed explanation for this choice in the proof of the following theorem.		
				
			}
						
%
%

						
						\begin{theorem} \label{convergence-epsilon}
							Assume the initial point \\ $(\theta^{(0)}, \lambda^{(0)}, \mu^{(0)}, Z^{(0)})$ attained in the first stage of Algorithm \ref{alg2-weighting} satisfies the uncontrollability constraint (\ref{rank-constraint}). Let the sequence $\{(\theta^{(k)}, \mu^{(k)},\lambda^{(k)},Z^{(k)})\}$ be generated by Algorithm \ref{alg2-weighting}. Then, we have
						{\begin{itemize}
								\item[(i)] $F(Z^{(k+1)})\le F(Z^{(k)})$, $k=0,1,\cdots,$\\~ $\lim \limits_{k\rightarrow \infty} (F(Z^{(k+1)})-F(Z^{(k)}))=0$;
								\item[(ii)] $\lim \limits_{k\rightarrow \infty} (\Upsilon^{(k+1)}-\Upsilon^{(k)})=0$, where $\Upsilon=\theta,\lambda,\mu$ and $Z$;
								\item[(iii)] Let $(\theta^*,\lambda^*,\mu^*,Z^*)$ be the limit of $(\theta^{(k)},\lambda^{(k)},\mu^{(k)},Z^{(k)})$ as $k\to \infty$. Then, $(\theta^*,\lambda^*,\mu^*,Z^*)$ is a stationary point of problem (\ref{rank-relaxation}) satisfying
								                    $$\sigma_{\min}([A(\theta^*)-({\bf j}\lambda^*+\mu^*)I, B(\theta^*)])\le \epsilon.$$
							\end{itemize}}
%
						\end{theorem}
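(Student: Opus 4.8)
The plan is to dispatch parts (i) and (ii) exactly as in the proof of Theorem \ref{converge}, and to concentrate the real effort on the $\epsilon$-tolerance assertion in part (iii). The second stage of Algorithm \ref{alg2-weighting} is nothing but Algorithm \ref{alg2} run with the fixed regularization parameter $\gamma=g(\theta^{(0)})/\epsilon$; hence the same majorizer $G(Z,Z^{(k)})$ built from the linearization (\ref{convexity}) applies, giving $F(Z^{(k+1)})\le G(Z^{(k+1)},Z^{(k)})\le G(Z^{(k)},Z^{(k)})=F(Z^{(k)})$. Together with $F(Z^{(k)})\ge 0$ this yields (i), and the majorization-minimization machinery of \cite{razaviyayn2013unified} then delivers the successive-difference statement (ii) for all the variables $\theta,\lambda,\mu,Z$, as well as the stationarity of the limit $(\theta^*,\lambda^*,\mu^*,Z^*)$ claimed in (iii).

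The heart of the argument is the quality bound, and the plan is to exploit the initial feasibility together with the monotonicity from part (i). First I would record the identity $\|Z\|_*-\|Z\|_{F_{2n-1}}=\sigma_{2n}(Z)$ from (\ref{trancated-norm}), so that $F(Z^{(k)})=g(\theta^{(k)})+\gamma\,\sigma_{2n}(Z^{(k)})$. By hypothesis the initial point furnished by the first stage satisfies the uncontrollability constraint (\ref{rank-constraint}), i.e.\ $\mathrm{rank}\,Z^{(0)}<2n$, hence $\sigma_{2n}(Z^{(0)})=0$ and $F(Z^{(0)})=g(\theta^{(0)})$. Combining this with part (i) gives $F(Z^{(k)})\le F(Z^{(0)})=g(\theta^{(0)})$ for every $k$. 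Now I would invoke the nonnegativity of $g$: since $g(\theta^{(k)})\ge 0$, the penalty term alone is controlled,
\[
\gamma\,\sigma_{2n}(Z^{(k)})\le g(\theta^{(k)})+\gamma\,\sigma_{2n}(Z^{(k)})=F(Z^{(k)})\le g(\theta^{(0)}),
\]
and substituting the designed value $\gamma=g(\theta^{(0)})/\epsilon$ collapses this to $\sigma_{2n}(Z^{(k)})\le\epsilon$ for all $k$ (this is precisely the reason for that choice of $\gamma$). Passing to the limit gives $\sigma_{2n}(Z^*)\le\epsilon$, and since $Z^*\in{\mathbb R}^{2n\times(2n+2m)}$ we have $\sigma_{\min}(Z^*)=\sigma_{2n}(Z^*)$. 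Finally, identifying $Z^*$ with the real block form $W$ of Lemma \ref{real-rank} for $X+{\bf j}Y=[A(\theta^*)-({\bf j}\lambda^*+\mu^*)I,\,B(\theta^*)]$ (with $X=[A(\theta^*)-\mu^* I,\,B(\theta^*)]$ and $Y=[-\lambda^* I,\,0]$), the identity $\sigma_{\min}(X+{\bf j}Y)=\sigma_{\min}(W)$ converts the bound into $\sigma_{\min}([A(\theta^*)-({\bf j}\lambda^*+\mu^*)I,B(\theta^*)])\le\epsilon$, completing (iii).

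The step I expect to need the most care is the well-definedness and sufficiency of this bound, and the existence of the limit invoked in (ii)--(iii). On the first point, one must verify $g(\theta^{(0)})>0$ so that $\gamma$ is finite and positive; this holds whenever the nominal pair $(A,B)$ is controllable, since any feasible $\theta^{(0)}$ is then nonzero and ${\bf \Gamma}(\theta^{(0)})\ne 0$ by the standing assumption on ${\bf \Gamma}$ (if instead $(A,B)$ is already uncontrollable the radius is $0$ and the claim is trivial). On the second point, I would reuse the compact-sublevel-set argument of Theorem \ref{converge}: coercivity of $F$ in $\theta$ (from ${\bf \Gamma}(\theta)\ne 0$ for $\theta\ne 0$, which forces ${\bf \Gamma}$ to be linear and injective) guarantees the level sets $\{Z\,|\,F(Z)\le F(Z^{(k)})\}$ are bounded, so that the iterates admit a limit. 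The delicate aspect here is that $\mu$ and $\lambda$ enter $F$ only through the penalty $\gamma\,\sigma_{2n}(Z)$ rather than through $g$, so the compactness claim in these coordinates is the point that warrants the most scrutiny before the limit $(\theta^*,\lambda^*,\mu^*,Z^*)$ can be taken.
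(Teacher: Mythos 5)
Your proof is correct and follows essentially the same route as the paper's: parts (i)--(ii) are inherited from Theorem~\ref{converge}, and the $\epsilon$-bound is obtained exactly as in the paper from $F(Z^*)\le F(Z^{(0)})=g(\theta^{(0)})$ (using $\sigma_{2n}(Z^{(0)})=0$ from the feasibility assumption), the nonnegativity of $g$, the choice $\gamma=g(\theta^{(0)})/\epsilon$, and Lemma~\ref{real-rank}. Your closing remarks on the positivity of $g(\theta^{(0)})$ and on compactness of the level sets in the $(\mu,\lambda)$ coordinates flag genuine points the paper leaves implicit, but they do not alter the argument.
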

						\begin{proof} Statements (i) and (ii) follow similar arguments to the proof of Theorem \ref{converge}. Hence, we only need to show $\sigma_{2n}(Z^*)\le \epsilon$.
							{	By the non-decreasing of $F(Z^{(k)})$ described in statement (i), we have $$F(Z^{*})=g(\theta^*)+\gamma\sigma_{2n}(Z^*)\le F(Z^{(0)})=g(\theta^{(0)}).$$ As $g(\theta^*)\ge 0$ and $\gamma = \frac{g(\theta^{(0)}}{\epsilon}$, we get $\sigma_{2n}(Z^*)\le \epsilon$. The required result then follows from Lemma \ref{real-rank}.}
						\end{proof}
						

	{	\begin{remark}
	 Although both Algorithm \ref{alg2} and Algorithm \ref{alg2-weighting} are guaranteed to converge to a local minimizer, our simulations in Section \ref{examples} indicate that Algorithm \ref{alg2-weighting} is generally capable of finding solutions with better uncontrollability quality (measured by the smallest $\epsilon$ satisfying (\ref{quality})) than Algorithm \ref{alg2}, given the same initial points. However, it is important to note that this improved performance comes at the cost of more computational resources needed to execute the first (initialization) stage.
       \end{remark} }

						The scheme for the RSCR could be extended to algorithms for the RSSZR and RSSR with appropriate modifications. In the sequel, we shall present these extensions without proof, due to their similarity to those for the RSCR.

						\subsection{Rank-relaxation based method for RSSZR}
						In line with the previous subsection, we briefly give the scheme of the rank-relaxation based method for RSSZR. By definition of stabilizability, the RSSZR problem can be formulated as problem (\ref{rank-PBH}) with the additional constraint $\mu\ge 0$. Thus, it can be rewritten as the following problem in the real field:
						\begin{align} \label{rank-constraint-real-stabiliability}
							&		\mathop {\min } \limits_{\theta\in {\mathbb R}^p, \mu, \lambda \in {\mathbb R}}  \ g(\theta)\\
							&		{\rm s.t.} \ \ \      {\rm rank}\left[
							\begin{array}{cccc}
								A(\theta)-\mu I & B(\theta) & -\lambda I & 0 \\
								\lambda I & 0 & A(\theta)-\mu I & B(\theta) \\
							\end{array}
							\right]<2n, \label{rank-constraint-stabilizable-2} \\
							\ \ &	\mu\ge 0. \label{positive-mode}
						\end{align}
						In the spirit of the TNNR, the rank-relaxation formulation of problem (\ref{rank-constraint-real-stabiliability}) is
						\begin{align} \label{rank-relaxation-stabiliability}
							\mathop {\min } \limits_{\theta\in {\mathbb R}^p, \mu, \lambda \in {\mathbb R}, Z\in {\mathbb R}^{2n\times (2n+2m)}} & \ g(\theta)+\gamma(||Z||_*-||Z||_{F_{2n-1}})\\
							{\rm s.t.} \ \ \ &  (\ref{Z-equality}) \ {\rm and} \ \mu\ge 0,
						\end{align}where $\gamma>0$ is the regularization parameter.
						
						Similar to problem (\ref{rank-relaxation}),  a practical way to find local optima of problem (\ref{rank-relaxation-stabiliability}) is via the sequential convex relaxation. Analogous to Algorithm \ref{alg2-weighting},  the procedure of the sequential convex relaxation with $\epsilon$-tolerance for `unstabilizability' is summarized in Algorithm \ref{alg2-RSSZR}, in which $F(Z^{(k)})$ is defined in the same way as that for the RSCR problem. Similar to Proposition \ref{convergence-epsilon}, Algorithm \ref{alg2-RSSZR} is guaranteed to converge to a stationary point $(\theta^{*}, \lambda^{*}, \mu^{*}, Z^{*})$ of problem (\ref{rank-relaxation-stabiliability}), and $\sigma_{2n}(Z^*)<\epsilon$ for a prescribed tolerance $\epsilon$, provided that $Z^{(0)}$ satisfies (\ref{rank-constraint-stabilizable-2})-(\ref{positive-mode}).
						
						\begin{algorithm}[H] 
							{{{
										\caption{: A two-stage rank-relaxation algorithm for RSSZR with $\epsilon$-tolerance} 
										\label{alg2-RSSZR} 
										\begin{algorithmic}[1]
											\STATE {Initialize $k=0$ and $\theta^{(0)}, \lambda^{(0)}, \mu^{(0)}$ and $Z^{(0)}$ by solving the DC program in a similar way to Algorithm \ref{alg2} with random initial conditions $(\tilde \theta^{(0)}, \tilde \lambda^{(0)}, \tilde \mu^{(0)})$
												\begin{equation}  \label{initial-prob}
													\begin{aligned}
														& \mathop {\min }\limits_{\theta, \mu, \lambda, Z} ||Z||_{*}-||Z||_{F_{2n-1}} \\
														s.t. \ & \ (\ref{Z-equality}) \ {\rm and} \ \mu\ge 0.
													\end{aligned}
												\end{equation}
											}
											\WHILE {$|F(Z^{(k)})-F(Z^{(k-1)})|>\xi$ ($\xi>0$ is the convergence threshold; assume the procedure is executed when $k=0$)} 
											\STATE Obtain $U_1^{(k)}$ and $V_1^{(k)}$ according to (\ref{svd}) via the SVD on $Z^{(k)}$;
											\STATE Set $\gamma= \frac{g(\theta^{(0)})}{\epsilon}$. Solve the following convex program to obtain $(\theta^{(k+1)},\lambda^{(k+1)}, \mu^{(k+1)}, Z^{(k+1)})$:
											\begin{equation} \label{sub-convex-iterate-RSSZR}
												\begin{aligned}
													\mathop {\min }\limits_{\theta, \mu, \lambda, Z} &{\kern 3pt} g(\theta) + \gamma||Z||_{*}-\gamma {\rm tr}(U_1^{(k),\intercal}ZV_1^{(k)})\\
													s.t. \ \ \  &  \ (\ref{Z-equality}) \ {\rm and} \ \mu\ge 0.
												\end{aligned}
											\end{equation}
											\STATE $k+1 \leftarrow k$;
											\ENDWHILE
											\STATE Return $\theta^{(k)}$ when convergence.
								\end{algorithmic}}}
							}
						\end{algorithm}
						
						\subsection{Rank-relaxation based method for RSSR}
						
						According to (\ref{stable-formula}), provided that $A$ is stable, the RSSR problem can be formulated as
						\begin{align} \label{rank-constraint-real-RSSR}
							\mathop {\min } \limits_{\theta\in {\mathbb R}^p, \lambda \in {\mathbb R}} & \ g(\theta)\\
							{\rm s.t.} \ \ \ &     {\rm rank}\left[
							\begin{array}{cc}
								A(\theta) & -\lambda I  \\
								\lambda I  & A(\theta) \\
							\end{array}
							\right]<2n, \label{rank-constraint-stable}
						\end{align}
						In the same line as (\ref{rank-relaxation}) and (\ref{rank-relaxation-stabiliability}),  we can obtain the rank-relaxation of problem (\ref{rank-constraint-real-RSSR}) as a DC program:
						\begin{align} \label{rank-relaxation-RSSR}
							\mathop {\min } \limits_{\theta\in {\mathbb R}^p, \lambda \in {\mathbb R}, Z\in {\mathbb R}^{2n\times 2n}} & \ F(Z)=g(\theta)+\gamma(||Z||_*-||Z||_{F_{2n-1}})\\
							{\rm s.t.} \ \ \ & Z=\left[
							\begin{array}{cc}
								A(\theta) & -\lambda I  \\
								\lambda I  & A(\theta) \\
							\end{array}
							\right]\label{Z-equality-RSSR}
						\end{align}with $\gamma>0$ being the regularization parameter.
						Similar to Algorithms \ref{alg2-weighting} and \ref{alg2-RSSZR}, we can develop a sequential-convex relaxation based algorithm to find stationary point solutions of problem (\ref{rank-relaxation-RSSR}) with $\epsilon$-tolerance for $\sigma_{2n}(Z^{(k)})$ to zero when convergence (provided that $Z^{(0)}$ satisfies constraint (\ref{rank-constraint-stable})), which is given as Algorithm \ref{alg2-RSSR}.
						
						\begin{algorithm} 
							{{{
										\caption{: A two-stage rank-relaxation algorithm for RSSR with $\epsilon$-tolerance (assuming $A$ stable)} 
										\label{alg2-RSSR} 
										\begin{algorithmic}[1]
											\STATE {Initialize $k=0$ and $\theta^{(0)}, \lambda^{(0)}$, and $Z^{(0)}$ by solving the DC program in a similar way to Algorithm \ref{alg2} with random initial conditions $(\tilde \theta^{(0)}, \tilde \lambda^{(0)})$
												\begin{equation*}
													\begin{aligned}
														& \mathop {\min }\limits_{\theta, \lambda, Z} ||Z||_{*}-||Z||_{F_{2n-1}} \\
														s.t. \ & \ (\ref{Z-equality-RSSR}).
													\end{aligned}
												\end{equation*}
											}
											\WHILE {$||F(\theta^{(k)})-F(\theta^{(k-1)})||>\xi$ (assume the procedure is executed when $k=0$)} 
											\STATE Obtain $U_1^{(k)}$ and $V_1^{(k)}$ according to (\ref{svd}) via the SVD on $Z^{(k)}$;
											\STATE Solve the following convex program to obtain $(\theta^{(k+1)},\lambda^{(k+1)}, Z^{(k+1)})$:
											\begin{equation} \label{convex-for-RSSR}
												\begin{aligned}
													\mathop {\min }\limits_{\theta, \lambda, Z} &{\kern 3pt} g(\theta) + \gamma||Z||_{*}-\gamma {\rm tr}(U_1^{(k),\intercal}ZV_1^{(k)})\\
													s.t. \ \ \  &  (\ref{Z-equality-RSSR}).
												\end{aligned}
											\end{equation}
											\STATE $k+1 \leftarrow k$;
											\ENDWHILE
											\STATE Return $\theta^{(k)}$ when convergence.
								\end{algorithmic}}}
							}
						\end{algorithm}
					
						\begin{remark} \label{sdp-solver}
						The convex programs (\ref{sub-convex-iterate}), (\ref{sub-convex-iterate-RSSZR}), and (\ref{convex-for-RSSR}) can all be solved by off-the-shelf solvers for semi-definite programs (SDP) such as SeDuMi, SDPT3, etc., by inspecting that, for $Z\in {\mathbb R}^{n_1\times n_2}$, $$||Z||_*=\min \limits_{{\tiny\begin{array}{c}
								W_1\in {\mathbb R}^{n_1\times n_1}\\ W_2\in {\mathbb R}^{n_2\times n_2}
								\end{array}}} \left\{\frac{1}{2}{\rm tr}(W_1+W_2): \left[
						\begin{array}{cc}
						W_1 & Z \\
						Z^\intercal & W_2 \\
						\end{array}
						\right]\succeq 0\right\}.
						$$
					\end{remark}
						
							
						\begin{remark}[Comparison with existing methods] 	To the best of our knowledge, the only available algorithms that can deal with the general affine structure are the structured total least squares-based algorithm proposed in \cite{khare2012computing} (see footnote 2), and the iterative algorithms presented in \cite{johnson2018structured}. Both of these algorithms transform the rank constraints into bilinear equalities by introducing extra variables that serve as the eigenvectors or singular vectors of certain matrices. For example, for the RSCR, the bilinear equality takes the form of $x^\intercal[A(\theta)-zI,B(\theta)]=0$, where $x\in {\mathbb C}^n$ and $z\in {\mathbb{C}}$, by introducing the variable $x$.
							In contrast, our proposed methods avoid introducing bilinear constraints (as well as the variable $x$) by relaxing the rank constraints with the regularized TNNR term, which is fundamentally different from the techniques used in \cite{khare2012computing,johnson2018structured,katewa2020real}. Furthermore, our proposed algorithms converge to local optima without relying on any intermediate assumptions. Even without the initialization stage, convergence is still guaranteed. In contrast, specific assumptions are required for the algorithms in \cite{johnson2018structured,katewa2020real}.				
						{	Another notable feature of our proposed methods is that, as mentioned earlier, it can handle $2$-norm and $F$-norm based problems in a unified way. In contrast, the methods in \cite{khare2012computing,johnson2018structured,katewa2020real} are all tailored for $F$-norm and are not directly applicable to $2$-norm based problems.}
							
						\end{remark}
					
					{ \begin{remark}[Limitations]
						Due to the NP-hardness of the considered problems, like \cite{johnson2018structured,bianchin2016observability,khare2012computing,katewa2020real}, our proposed algorithms could only find the local minima. The returned local minima may be influenced by the initial points of the iterative algorithms. Besides, even though Algorithms \ref{alg2-weighting}, \ref{alg2-RSSZR}, and \ref{alg2-RSSR} have given the expression of the regularization parameter $\gamma$, we found in practice that the computational cost can increase significantly when this value is too large. Therefore, how to choose a better $\gamma$ remains challenging. Some adaptive update strategy may be preferred~\cite{lin2011linearized}. 
					  \end{remark}}
					
						
					{	\begin{remark}[Computational complexity] In each iteration of our proposed algorithms, the computational complexity is dominated by the SVD on $Z^{(k)}$ with dimension $2n\times (2n+2m)$ for the RSCR and RSSZR, and $2n\times 2n$ for the RSSR, and solving the respective convex programs (\ref{sub-convex-iterate}), (\ref{sub-convex-iterate-RSSZR}), and (\ref{convex-for-RSSR}). The former has a time complexity of $O(n^2(n+m))$ ($O(n^3)$ for the RSSR) \cite{R.A.1991Topics}. The latter can be solved by some commercial SDP solvers in a reasonably fast time, including SeDuMi and SDPT3 mentioned in Remark \ref{sdp-solver}. The complexity of solving an SDP is upper bounded by $O(k^{5.246})$ with $k$ being the dimension of the involved semi-definite matrices when using the interior-point method in \cite{jiang2020faster}. Therefore, the complexity of each iteration in our methods is upper bounded by $O(n^{5.246})$. This complexity is comparable to the $O(n^5)$ complexity in each iteration of the algorithm given in \cite{johnson2018structured}, which is devoted to the $F$-norm based problems only. Moreover, by inspecting the inherent TNNR structure of (\ref{sub-convex-iterate}), (\ref{sub-convex-iterate-RSSZR}), and (\ref{convex-for-RSSR}), we can resort to the specialized algorithm named TNNR-APCL (TNNR accelerated proximal gradient line method) give in \cite{hu2012fast} to solve those convex programs more efficiently. In each iteration of TNNR-APCL, the main computational cost is the computation of SVD, and TNNR-APCL has a quadratic convergence rate. This may make our methods feasible for large-scale systems.  
							
						\end{remark}}
						

						\section{Nuerical Examples} \label{examples}
						This section provides several examples to demonstrate the effectiveness of the proposed methods.
						\begin{example}\label{stability-example} Consider the system that appeared in \cite{qiu1993formula} and \cite{katewa2020real} with {\small $A=\left[\begin{matrix}
									79&20&-30&-20\\
									-41&-12&17&13\\
									167&40&-60&-38\\
									33.5&9&-14.5&-11
								\end{matrix}\right]$} and the perturbation $A_\Delta=E\Delta H$, where
							{\small $E=\left[\begin{matrix}
									0.2190 & 0.9347 \\
									0.0470& 0.3835 \\
									0.6789& 0.5194 \\
									0.6793& 0.8310
								\end{matrix}\right]$,  $H=\left[\begin{array}{cccc}
									0.0346&0.5297&0.0077&0.0668 \\
									0.0535&0.6711&0.3848&0.4175
								\end{array}\right]$}
							and $\Delta\in {\mathbb R}^{2\times 2}$. As in \cite{katewa2020real}, we investigate the following two classes of structural constraints on $\Delta$, and for each one we shall consider the RSSR in terms of both the F-norm (i.e. $||\Delta||_F$) and the $2$-norm (i.e., $||\Delta||_2$):

								%

							\begin{itemize}
								\item Case I: $\Delta$ is full, corresponding to $p=4$;
								
								\item Case II:  $\Delta$ is diagonal, corresponding to $p=2$.
							\end{itemize}
							As mentioned earlier, in Case II, $||\Delta||_2$ is exactly the maximum magnitude of diagonal entries of $\Delta$.

							As $A$ is stable (its eigenvalues are $\{-1\pm {\bf j}10, -1\pm {\bf j}1\}$), we use Algorithm \ref{alg2-RSSR} to compute its RSSR. Particularly, $\gamma=\min\{5,g(\theta^{(0)})/\epsilon\}$ (due to the limitation on the computer precision, if $\gamma$ is too large then the term $g(\theta)$ might be neglected compared with $\gamma(||Z||_*-||Z||_{F_{2n-1}})$ in $F(Z)$; also, we found a larger $\gamma$ may take a relatively long time to converge), with $\epsilon=10^{-4}$. The convergence threshold is $\xi=10^{-5}$ for both stages of Algorithm \ref{alg2-RSSR}. The maximum number of iterations is $600$.   Since there may be multiple stationary points for the corresponding rank-relaxed problems, we conduct $100$ trials of Algorithm \ref{alg2-RSSR} for each structural constraint and each norm, in which for each trial we randomly choose the initial $\tilde \theta^{(0)}$ with each of its elements uniformly distributed in $(0,1)$ and $\tilde \lambda^{(0)}$ in $(-2,2)$.
							
							We collect the results in Table \ref{stability-case}. In this table, the successful rate comes from the following facts:  From \cite{qiu1993formula} and \cite{katewa2020real}, we know the optimal $r_{stb}$ w.r.t. the F-norm for Case I and Case II are respectively $0.5159$ and $0.5653$, and w.r.t. the $2$-norm for Case I is $0.5141$.\footnote{The weeny difference between $0.5141$ in \cite{qiu1993formula} and $0.5132$ obtained herein may result from the rounding errors and different computation precisions.} The optimality of $0.5284$ for Case II w.r.t. the $2$-norm can be verified from Fig \ref{optimality-verify}, which plots the regions of all eigenvalues of $A+E\Delta H$ with $\Delta={\bf diag}\{\theta_1,\theta_2\}$ and $-0.5284\le \theta_1,\theta_2\le 0.5284$. We present four typical successful solutions in Table \ref{typical-cases},  as well as their corresponding (randomly generated) initial conditions, perturbations $\Delta$, and eigenvalues of the resulting $A+A_{\Delta}$. For the four cases, the evolution of the relaxed objective function $F(Z^{(k)})$ for the second stage of Algorithm \ref{alg2-RSSR} during the iteration is given in
							Fig. \ref{iteration-function}. This figure shows the non-increasing property of $F(Z^{(k)})$.
							
							From Table \ref{stability-case}, it is found: (1) With very high probability, Algorithm \ref{alg2-RSSR} will return the optimal solutions for all four cases. In particular, a successful rate of $100\%$ has been obtained for Case II w.r.t. the F-norm. This indicates the proposed method is relatively robust to the initial conditions. (2) Determining the $2$-norm based RSSR tends to require more iterations for Algorithm \ref{alg2-RSSR} to converge than that of the F-norm based RSSR. This is perhaps because the F-norm based optimization is easier to handle compared with the $2$-norm based one in most off-the-shelf convex optimization solvers \cite{boyd2004convex} (possibly because the gradient of the F-norm is easier to compute).

							\begin{table*}
								\centering
								\begin{threeparttable}
									\caption{RSSR via Algorithm \ref{alg2-RSSR} in $100$ experiments} \label{stability-case}
									\begin{tabular}{c|c|c|c|c}
										\hline
										Cases                             &  RSSR ($r_{stb}$)     & Frequency & Successful rate       & Average iterations\tnote{1}     \\ \hline
										\multirow{2}{*}{Case I (2-norm)} & 0.5132 & 96          & \multirow{2}{*}{{\bf 96 }\%} & \multirow{2}{*}{34.37} \\ \cline{2-3}
										& 1.0718 & 4           &                       &                        \\ \hline
										\multirow{3}{*}{Case I (F-norm)} & 0.5159 & 96          & \multirow{3}{*}{{\bf 96 }\%} & \multirow{3}{*}{12.64} \\ \cline{2-3}
										& 1.0350 & 2           &                       &                        \\ \cline{2-3}
										& 1.0717 & 2           &                       &                        \\ \hline
										\multirow{2}{*}{Case II (2-norm)} & 0.5284 & 89          & \multirow{2}{*}{{\bf 89 }\%} & \multirow{2}{*}{17.53} \\ \cline{2-3}
										& [1.0899, 1.0920]\tnote{2} &   11         &                       &                        \\ \hline
										Case II (F-norm)                 & 0.5653 & 100         & {\bf 100 }\%                 & 11.50                  \\ \hline
									\end{tabular}
									\begin{tablenotes}
										\footnotesize
										\item[1] Including the iterations in the first stage and the second stage of Algorithm \ref{alg2-weighting}
										\item[2] Meaning that the $r_{stb}$ falls into this interval.
									\end{tablenotes}
								\end{threeparttable}
							\end{table*}
							\begin{table*}
								\centering
								\caption{Typical successful cases returned via Algorithm \ref{alg2-RSSR}}\label{stability-case-continue}
								\label{typical-cases}
								\begin{tabular}{c|c|c|c}
									\hline
									Cases      & Initial condition      & Typical successful $\Delta$ & Eigenvalues of $A+A_{\Delta}$ \\
									\hline
									{\footnotesize{Case I (2-norm)}}  &  {{\tiny$\tilde \lambda^{(0)}= 1.6763, \tilde \theta^{(0)}=\left[ \begin{matrix}
												0.6035 \\
												0.8478 \\
												0.5046\\
												0.0079
											\end{matrix} \right]$}}  &   {\footnotesize{$\left[ \begin{matrix}
												-0.4973 &  0.1269 \\
												0.1269 &   0.4973	
											\end{matrix} \right]_{r^S_{stb}=0.5132}$}}            &     {\footnotesize {$\{ -1.7523 \pm {\bf j}9.2898, \pm {\bf j}1.3744\}$}    }                        \\
									\hline
									{\footnotesize{ Case I (F-norm)}}  &  {\tiny $\tilde \lambda^{(0)}=-1.8452, \tilde \theta^{(0)}=\left[ \begin{matrix}
											0.2800 \\
											0.3669	\\
											0.3335 \\
											0.7948
										\end{matrix} \right]$} &   {\footnotesize  {$\left[ \begin{matrix}
												-0.0326 &  -0.0707 \\
												0.1978 &   0.4701	
											\end{matrix} \right]_{r^F_{stb}=0.5159}$}}        &       {\footnotesize   {$\{ -1.7922  \pm {\bf j}9.1860, \pm {\bf j}1.3744\}$}    }                        \\
									\hline
									{\footnotesize{Case II (2-norm)}} & {\tiny  $\tilde \lambda^{(0)}=-1.5612, \tilde \theta^{(0)}=\left[ \begin{matrix}
											0.5225 \\
											0 \\
											0 \\
											0.1058
										\end{matrix} \right]$} &  {\footnotesize {$\left[ \begin{matrix}
												-0.5284 &  0 \\
												0 &   0.5284	
											\end{matrix} \right]_{r^S_{stb}=0.5284}$}  }        &    {\footnotesize {$\{-1.7963 \pm {\bf j}9.1147, \pm {\bf j}1.4698\}$}}    \\ \hline
									{\footnotesize{Case II (F-norm)}} & {\tiny  $\tilde \lambda^{(0)}=1.4303, \tilde \theta^{(0)}=\left[ \begin{matrix}
											0.3837 \\
											0 \\
											0 \\
											0.0306
										\end{matrix} \right]$} &  {\footnotesize {$\left[ \begin{matrix}
												-0.0422 &  0 \\
												0 &   0.5638	
											\end{matrix} \right]_{r^F_{stb}=0.5653}$}  }        &    {\footnotesize {$\{-1.7610 \pm {\bf j}8.9371, \pm {\bf j}1.3366\}$}}    \\ \hline
								\end{tabular}
							\end{table*}
							
							\begin{figure}
								\centering
								\includegraphics[width=0.57\linewidth]{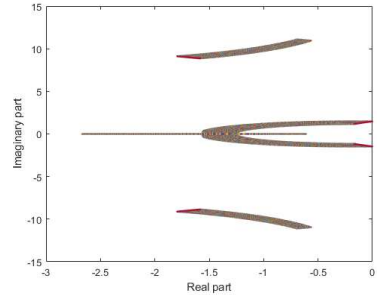}
								\caption{This figure plots the regions of all eigenvalues of $A+E\Delta H$ with $\Delta={\bf diag}\{\theta_1,\theta_2\}$ on $-0.5284\le \theta_1,\theta_2\le 0.5284$. Those regions have two boundary points exactly in the imaginary axis.}
								\label{optimality-verify}
							\end{figure}
							
							\begin{figure}
								\centering
								\includegraphics[width=0.65\linewidth]{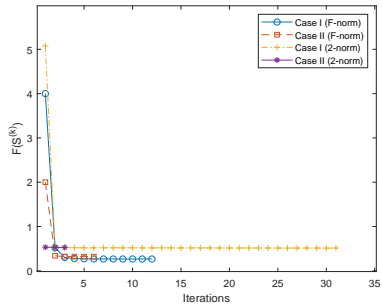}
								\caption{Evolution of $F(Z^{(k)})$. Only the $F(Z^{(k)})$ of the second stage of Algorithm \ref{alg2-RSSR} is presented, since the first stage usually terminates within $2$ or $3$ iterations. }
								\label{iteration-function}
							\end{figure}
							
						\end{example}
						
						\begin{example}
							Consider the following system that appeared in \cite{johnson2018structured} with {\small{ $A=\left[\begin{matrix}
										0&-1& 0& 0 \\
										-1& 0& 1& 0 \\
										1& 0&-1& 0 \\
										-1& 1& 0& 1
									\end{matrix}\right]$, $B=\left[\begin{matrix}
										1 \\
										0 \\
										0 \\
										1
									\end{matrix}\right]$, $A_{\Delta}=\left[\begin{matrix}
										0& 0 & 0& \theta_{14} \\
										0& 0& 0& \theta_{24} \\
										0& 0&\theta_{33}& 0 \\
										\theta_{41}& \theta_{42}& \theta_{43}& \theta_{44}
									\end{matrix}\right]$}}, and  {\small $B_{\Delta}=\left[\begin{matrix}
									\theta_{15} \\
									\theta_{25} \\
									\theta_{35} \\
									\theta_{45}
								\end{matrix}\right]$}.
							We conduct $200$ trials using Algorithm \ref{alg2-weighting} for the RSCR. Let $\gamma$, $\epsilon$, and $\xi$ be selected in the same way as in Example \ref{stability-example}.  In each trial,  we select the initial $\tilde \lambda^{(0)}$ and $\tilde \mu^{(0)}$ randomly from the uniform distribution $(-2,2)$, and $\tilde \theta^{(0)}=[\theta_{41},\theta_{42},\cdots,\theta_{45}]^\intercal$ with its elements randomly from the uniform distribution $(0, 1)$.  We collect the corresponding $r^F_{con}=||\theta||_F$ in Table \ref{controllability-radius-table}. For each returned $\theta$, it has been checked that $\min \nolimits_{z\in {\mathbb{C}}}\sigma_n[A+A_{\Delta}-zI, B+B_\Delta]<6.6738\times 10^{-9}$. This again demonstrates the robustness of the proposed algorithm in converging to the local minima. From Table \ref{controllability-radius-table}, it is found that the obtained $r^F_{con}$ almost falls into three intervals:  \\~ $[0.0027693, 0.0050129]$, $[0.32018, 0.32020]$, and \\~  $[0.64120, 0.64122]$, among which the first interval is of the highest probability ($67.5\%$). It is noted that the third interval is consistent with the result reported in Example 3 of \cite{johnson2018structured}.
							\begin{table*}[!htbp]
								\centering
								\caption{RSCR obtained via Algorithm \ref{alg2-weighting} in $200$ experiments.} \label{controllability-radius-table}
								\begin{tabular}{c|c|c|c}
									\hline
									Interval of $r^F_{con}$ & $[0.0027693,0.0050129]$ & $[0.32018,0.32020]$ & $[0.64120,0.64122]$ \\
									\hline
									Frequencies & 135  & 23  & 43 \\
									\hline
								\end{tabular}
							\end{table*}

							
							Particularly, corresponding to $r^F_{con}=0.0027693$, the associated perturbation is
							{\small
								\begin{equation} \label{perturbation-con1}
									[A_\Delta, B_{\Delta}]\!=\!10^{-3}\left[\begin{matrix}
										0&0&0&0  & 0.0358 \\
										0&0&0&0 &  -0.3825 \\
										0&0  &  -0.6774 &0  &  -0.2945\\
										1.7613  & 0.5038   &  1.6916 &   0.6628  & 0.5647
									\end{matrix}
									\right]	\end{equation}}It can be verified that at $z=-0.0006770\in {\bf \Lambda}(A+A_\Delta)$, $\sigma_{n}([A+A_\Delta-z I, B+B_\Delta])=2.2269\times 10^{-9}$. This RSCR value is less than the one obtained in Example 3 of \cite{johnson2018structured}.
							
							For the RSSZR, with the randomly generated initial condition $\tilde \lambda^{(0)}=1.1209$, $\tilde \mu^{(0)}=-1.6755$, and $\tilde \theta^{(0)}=[0.4509,$\\~
							${\tiny{0.5470,0.2963,0.7447,0.1890,0.6868,0.1835,0.3685,0.6256]^\intercal}}$, we obtain  $r^F_{stz}=0.003178$, with the corresponding 
							{\small
							$$		[A_\Delta, B_{\Delta}]=10^{-3}\left[\begin{matrix}
										0&0&0&0  &  -0.6722 \\
										0&0&0&0 &   -0.1786 \\
										0&0  &  0.0000&0  &  0.1786\\
										1.0970 &  -1.1250  &  1.3720 &   2.2880  & -0.0060
									\end{matrix}
									\right]
							$$} It turns out $A+A_{\Delta}$ has an eigenvalue $z=3.0982\times 10^{-8}$, at which $\sigma_{n}([A+A_\Delta-z I, B+B_\Delta])=6.4983\times 10^{-11}$.  As expected, the $r^F_{stz}$ is larger than the $r^F_{con}$ obtained above. 
							
							
							\begin{figure}[!htbp]
								\centering
								\includegraphics[width=0.68\linewidth]{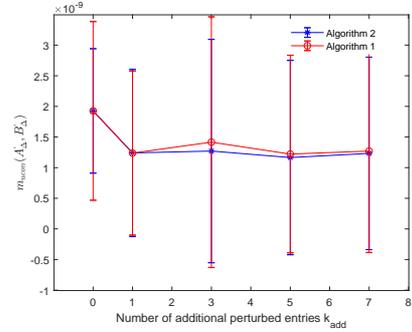}
								\caption{Mean values and standard deviations of $m_{ucon}(A'_{\Delta},B'_\Delta)$}
								\label{compare_quality}
							\end{figure}
							
						\end{example}
						
						
						In order to compare the performance of Algorithm \ref{alg2-weighting} and Algorithm \ref{alg2} (i.e. the algorithm with randomly generated initial conditions), we further allow $k_{add}$ additional entries of $(A,B)$ to be perturbed apart from those in $[A_{\Delta}, B_{\Delta}]$. For each $k_{add}$,  we carry out $200$ experiments by using Algorithm \ref{alg2-weighting} and Algorithm \ref{alg2} to solve the corresponding RSCR problems, in which the positions of the $k_{add}$ additional perturbed entries are randomly selected, and the rest of the algorithmic parameters, as well as the initial points, are set in the same way described above (in each
						experiment, we conduct Algorithms \ref{alg2-weighting} and \ref{alg2} for randomly-generated, identical initial points). For each obtained perturbation $(A'_{\Delta}, B'_{\Delta})$, we adopt $m_{ucon}(A'_{\Delta},B'_\Delta)\doteq \min\nolimits_{z\in {\bf \Lambda}(A+A'_\Delta)}\sigma_n([A+A'_\Delta-z I, B+B'_\Delta])$ as the metric to measure the quality of the perturbation against controllability. Fig. \ref{compare_quality} plots the mean value and its standard deviation of $m_{ucon}(A'_{\Delta}, B'_{\Delta})$ for each $k_{add}$. From this figure, we observe that for the same $k_{add}$, the mean value of $m_{ucon}(A'_{\Delta}, B'_{\Delta})$ via Algorithm \ref{alg2-weighting} is overall equal to or smaller than that of Algorithm \ref{alg2}, while the standard deviation is overall relatively smaller. This indicates the first stage of Algorithm \ref{alg2-weighting} might be helpful in producing more `stable' solutions with better `quality' than Algorithm \ref{alg2}.

						\section{Conclusions}
In this paper, we have addressed the RSCR, RSSZR, and RSSR problems when perturbations depend on the perturbed parameters affinely. Our main contributions are twofold. Firstly, we have proven that checking the feasibility of the RSCR and RSSZR is NP-hard, which implies that global optima of these problems cannot be found in polynomial time unless $P=NP$. Secondly, we have proposed unified rank-relaxation based algorithms for these three problems that exploit the low-rank structure of the problems and use DC programs and SDP solvers. These algorithms are valid for both the $2$-norm and F-norm based problems and are capable of finding local minima with performance specifications on the perturbations under suitable conditions. Numerical simulations have confirmed the effectiveness of our proposed algorithms.

It is worth noting that our methods can be easily extended to handle complex-valued perturbations by changing the real field ${\mathbb R}$ to the complex field ${\mathbb C}$. However, the choice of initial points can influence the obtained local minima, and the computational complexity of our proposed methods is primarily dominated by the adopted convex program solvers. Therefore, future research could explore theoretical approaches for selecting better initial points and developing specialized algorithms that can enhance the computational efficiency of the proposed methods further.

						\section*{Acknowledgements}
						This work was supported in part by the National Natural Science Foundation of China under Grant 62003042.
						
						\section*{References}
						{\footnotesize
							\bibliographystyle{unsrt}
							\bibliography{yuanz3}

\begin{thebibliography}{10}
\expandafter\ifx\csname url\endcsname\relax
  \def\url#1{\texttt{#1}}\fi
\expandafter\ifx\csname urlprefix\endcsname\relax\def\urlprefix{URL }\fi
\expandafter\ifx\csname href\endcsname\relax
  \def\href#1#2{#2} \def\path#1{#1}\fi

\bibitem{kailath1980linear}
T.~Kailath, Linear systems, Vol.~1, Prentice-Hall Englewood Cliffs, NJ, 1980.

\bibitem{K.M.UpperSaddleRiverNewJersey1996Robust}
K.~M. Zhou, J.~C. Doyle, K.~Glover, Robust and Optimal Control, Prentice Hall,
  Upper Saddle River, New Jersey, 1996.

\bibitem{hinrichsen1986stability}
D.~Hinrichsen, A.~J. Pritchard, Stability radius for structured perturbations
  and the algebraic riccati equation, Systems \& Control Letters 8~(2) (1986)
  105--113.

\bibitem{paige1981properties}
C.~Paige, Properties of numerical algorithms related to computing
  controllability, IEEE Transactions on Automatic Control 26~(1) (1981)
  130--138.

\bibitem{WickDistance1991}
M.~Wicks, R.~DeCarlo, Computing the distance to an uncontrollable system, IEEE
  Transactions on Automatic Control 36~(1) (1991) 39--49.

\bibitem{R.E1984Between}
R.~Eising, Between controllable and uncontrollable, Systems \& Control Letters
  48~(5) (1984) 263--264.

\bibitem{gu2006fast}
M.~Gu, E.~Mengi, M.~L. Overton, J.~Xia, J.~Zhu, Fast methods for estimating the
  distance to uncontrollability, SIAM Journal on Aatrix Analysis and
  Applications 28~(2) (2006) 477--502.

\bibitem{Hu2004Real}
G.~Hu, E.~J. Davison, Real controllability/stabilizability radius of lti
  systems, IEEE Transactions on Automatic Control 49~(2) (2004) 254--257.

\bibitem{khare2012computing}
S.~R. Khare, H.~K. Pillai, M.~N. Belur, Computing the radius of controllability
  for state space systems, Systems \& Control Letters 61~(2) (2012) 327--333.

\bibitem{byers1988bisection}
R.~Byers, A bisection method for measuring the distance of a stable matrix to
  the unstable matrices, SIAM Journal on Scientific and Statistical Computing
  9~(5) (1988) 875--881.

\bibitem{boyd1990regularity}
S.~Boyd, V.~Balakrishnan, A regularity result for the singular values of a
  transfer matrix and a quadratically convergent algorithm for computing its
  l¡Þ-norm, Systems \& Control Letters 15~(1) (1990) 1--7.

\bibitem{hinrichsen1990real}
D.~Hinrichsen, A.~J. Pritchard, Real and complex stability radii: a survey, in:
  Control of uncertain systems, Springer, 1990, pp. 119--162.

\bibitem{freitag2014new}
M.~A. Freitag, A.~Spence, A new approach for calculating the real stability
  radius, BIT Numerical Mathematics 54~(2) (2014) 381--400.

\bibitem{rostami2015new}
M.~W. Rostami, New algorithms for computing the real structured pseudospectral
  abscissa and the real stability radius of large and sparse matrices, SIAM
  Journal on Scientific Computing 37~(5) (2015) S447--S471.

\bibitem{qiu1993formula}
L.~Qiu, B.~Bernhardsson, A.~Rantzer, E.~J. Davison, J.~Doyle, A formula for
  computation of the real stability radius, Automatica 31~(6) (1993) 879--890.

\bibitem{qiu1995computation}
L.~Qiu, A.~L. Tits, Y.~Yang, On the computation of the real {H}urwitz-stability
  radius, IEEE Transactions on Automatic Control 40~(8) (1995) 1475--1476.

\bibitem{sreedhar1996fast}
J.~Sreedhar, P.~Van~Dooren, A.~L. Tits, A fast algorithm to compute the real
  structured stability radius, in: Stability theory, Springer, 1996, pp.
  219--230.

\bibitem{johnson2018structured}
S.~C. Johnson, M.~Wicks, M.~{\v{Z}}efran, R.~A. DeCarlo, The structured
  distance to the nearest system without property {P}, IEEE Transactions on
  Automatic Control 63~(9) (2018) 2960--2975.

\bibitem{bianchin2016observability}
G.~Bianchin, P.~Frasca, A.~Gasparri, F.~Pasqualetti, The observability radius
  of networks, IEEE transactions on Automatic Control 62~(6) (2016) 3006--3013.

\bibitem{katewa2020real}
V.~Katewa, F.~Pasqualetti, On the real stability radius of sparse systems,
  Automatica 113 (2020) 108685.

\bibitem{Morse_1976}
J.~Corfmat, A.~S. Morse, Structurally controllable and structurally canonical
  systems, IEEE Transactions on Automatic Control 21~(1) (1976) 129--131.

\bibitem{Anderson_1982}
B.~D. Anderson, H.-m. Hong, Structural controllability and matrix nets,
  International Journal of Control 35~(3) (1982) 397--416.

\bibitem{zhou1996robust}
K.~Zhou, J.~C. Doyle, K.~Glover, et~al., Robust and Optimal Control, Vol.~40,
  Prentice hall New Jersey, 1996.

\bibitem{zhang2019structural}
Y.~Zhang, T.~Zhou, Structural controllability of an {NDS} with {LFT}
  parameterized subsystems, IEEE Transactions on Automatic Control 64~(12)
  (2019) 4920--4935.

\bibitem{zhang2021structural}
Y.~Zhang, Y.~Xia, D.-H. Zhai, Structural controllability of networked relative
  coupling systems, Automatica 128 (2021) 109547.

\bibitem{KarowStructured2009}
M.~Karow, D.~Kressner, On the structured distance to uncontrollability, Systems
  \& Control Letters 58~(2) (2009) 128--132.

\bibitem{nemirovskii1993several}
A.~Nemirovskii, Several np-hard problems arising in robust stability analysis,
  Mathematics of Control, Signals and Systems 6~(2) (1993) 99--105.

\bibitem{hu2012fast}
Y.~Hu, D.~Zhang, J.~Ye, X.~Li, X.~He, Fast and accurate matrix completion via
  truncated nuclear norm regularization, IEEE Transactions on Pattern Analysis
  and Machine Intelligence 35~(9) (2012) 2117--2130.

\bibitem{computation_complexity_modern_2009}
S.~Arora, B.~Boaz, Computational Complexity: a Modern Approach, Cambridge
  University Press, 2009.

\bibitem{R.A.1991Topics}
R.~A. Horn, C.~R. Johnson, Topics in Matrix Analysis, Cambridge University
  Press, 1991.

\bibitem{valiant1979completeness}
L.~G. Valiant, Completeness classes in algebra, in: Proceedings of the eleventh
  annual ACM symposium on Theory of computing, 1979, pp. 249--261.

\bibitem{poljak1993checking}
S.~Poljak, J.~Rohn, Checking robust nonsingularity is np-hard, Mathematics of
  Control, Signals and Systems 6~(1) (1993) 1--9.

\bibitem{qi1996extreme}
L.~Qi, R.~S. Womersley, On extreme singular values of matrix valued functions,
  Journal of Convex Analysis 3 (1996) 153--166.

\bibitem{boyd2004convex}
S.~Boyd, S.~P. Boyd, L.~Vandenberghe, Convex Optimization, Cambridge University
  Press, 2004.

\bibitem{yuille2003concave}
A.~L. Yuille, A.~Rangarajan, The concave-convex procedure, Neural computation
  15~(4) (2003) 915--936.

\bibitem{razaviyayn2013unified}
M.~Razaviyayn, M.~Hong, Z.-Q. Luo, A unified convergence analysis of block
  successive minimization methods for nonsmooth optimization, SIAM Journal on
  Optimization 23~(2) (2013) 1126--1153.

\bibitem{watson1992characterization}
G.~A. Watson, Characterization of the subdifferential of some matrix norms,
  Linear Algebra and its Applications 170~(0) (1992) 33--45.

\bibitem{lanckriet2009convergence}
G.~Lanckriet, B.~K. Sriperumbudur, On the convergence of the concave-convex
  procedure, Advances in neural information processing systems 22 (2009)
  1759--1767.

\bibitem{lin2011linearized}
Z.~Lin, R.~Liu, Z.~Su, Linearized alternating direction method with adaptive
  penalty for low-rank representation, Advances in neural information
  processing systems 24.

\bibitem{jiang2020faster}
H.~Jiang, T.~Kathuria, Y.~T. Lee, S.~Padmanabhan, Z.~Song, A faster interior
  point method for semidefinite programming, in: 2020 IEEE 61st annual
  symposium on foundations of computer science (FOCS), IEEE, 2020, pp.
  910--918.

\end{thebibliography}
						}

					\end{document}